\newtheorem{Thm}{Theorem}[section]
\newtheorem{Lem}[Thm]{Lemma}
\newtheorem{Cor}[Thm]{Corollary}
\newtheorem{Prop}[Thm]{Proposition}
\newtheorem{Def}[Thm]{Definition}
\newcommand{\man}[1]{\ensuremath{\mathcal{#1}}} 
\newcommand{\bound}[2][]{\ensuremath{\partial({#1}(\man{#2}))}} 
\newcommand{\ABound}[1][M]{\ensuremath{\mathcal{B}(\mathcal{#1})}} 
\newcommand{\seq}[1]{\ensuremath{\mathfrak{#1}}} 
\newcommand{\Seqz}[1][\man{M}]{\ensuremath{\Sigma_0({#1})}}  
\newcommand{\Seqs}[2]{\ensuremath{\Sigma({#1},{#2})}} 
\newcommand{\covers}{\ensuremath{\rhd}} 
\newcommand{\EmptySet}{\ensuremath{\varnothing}}
\newcommand{\Cau}[1][d]{\ensuremath{\mathcal{C}({#1})}} 
\newcommand{\Dis}[1][\man{M}]{\ensuremath{D({#1})}} 
\title{A Correspondence Between Distances and Embeddings for Manifolds: New Techniques for Applications of the Abstract Boundary}
  \author{Ben E. Whale\footnote{Corresponding Author}\ \footnote{ben.whale@anu.edu.au}\ \footnote{Centre for Gravitational Physics,
College of Physical \& Mathematical Sciences, The Australian National University, Canberra, ACT 0200,
AUSTRALIA} , Susan M. Scott\footnote{susan.scott@anu.edu.au}\ \footnotemark[3]}
\begin{document}

  \maketitle

  \begin{abstract}
    We present a one-to-one correspondence between
    equivalence classes of embeddings of a manifold (into a larger manifold of the same
    dimension)
    and equivalence
    classes of certain distances on the manifold.  This
    correspondence allows us to use the Abstract Boundary to describe the structure of the
    `edge' of our manifold without resorting to structures
    external to the manifold itself.  This is particularly
    important in the study of singularities within General
    Relativity where singularities lie on this `edge'.  The
    ability to talk about the same objects, e.g., singularities, via different
    structures provides alternative routes for investigation which
    can be invaluable in the pursuit of physically motivated problems where certain types of information are unavailable or difficult to use.
  \end{abstract}

{\bf Keywords:} General Relativity, Abstract Boundary,
    Distance, Topological Metric, Embeddings
{\bf PACS:} 04.20.Dw
{\bf MSC:} 57R40, 83C75, 53C80

\section{Introduction}
  The study of singularities within General Relativity suffers
  from a unique problem in physics: there is no background metric
  in which the singularity exists.  Yet our intuition wishes to
  describe these `singularities' with a location and physical
  properties.  There are also the additional problems of providing
  a co-ordinate independent definition of a singularity and a
  description of the full range of singular behaviour.

  There are a number of boundary constructions which attempt
  to provide both a definition of and a location for singularities
  of space-times (see \cite{Ashley2002a} for a review of the most notable
  boundary constructions, \cite{Senovilla1998} for a review of the field in
  general and \cite{BeemEhrlichEasley1996} for proofs of the most important applications) for a review of the field in
  general).  The three most common examples are the $g$-boundary \cite{Geroch1968a},
  $b$-boundary \cite{Schmidt1971} and the $c$-boundary \cite{GerochPenroseKronheim1972}\footnote{See the preprint, \cite{Flores2010Final}, for an
  up to date review of the $c$-boundary including work conducted at the same time as this paper.}.  Each of these uses some
  aspects of the metric structure to identify `missing' points
  from the space-time and then prescribes a method to re-attach
  them.  Because each of these constructions uses the metric
  structure in this way, they each suffer from a variety of flaws.
  The 
  Abstract Boundary (or $a$-boundary)
  \cite{ScottSzekeres1994} avoids these flaws by only using topological information in its construction.
  
  The Abstract Boundary avoids using the metric structure by using embeddings as a kind of reference for boundary
  structure. Specifically, it uses the set of
  all embeddings $\phi:\man{M}\to\man{M}_\phi$, where $\man{M}$ is the manifold of
  our original space-time and $\man{M}_\phi$ is a manifold of the
  same dimension as $\man{M}$ (we shall refer to such an embedding as an envelopment), 
  to construct a set of equivalence
  classes of boundary points of these embeddings.  Each equivalence class corresponds to the
  representation of a `missing' point in a co-ordinate chart.
  Hence a `missing' point may look very different depending on the
  chosen chart.  Remarkably, a definition and location of a singularity can be
  retrieved from this rather general set-up.
  
  We note that there have been two important recent contributions in this area.
  The first is Garc\'{i}a-Parrado and Senovilla's isocausal boundary
  \cite{0264-9381-22-21-009,GarciaParrado:2002xt,0264-9381-22-9-R01} which is, in part, inspired from the Abstract Boundary. The
  second is a number of recent developments of the $c$-boundary; we refer the reader to S\'{a}nchez' interesting
  paper \cite{Sanchez2009e1744}. In both
  cases their work was more directly concerned with causal structures.

  The very nice thing about the $a$-boundary is that it avoids all the
  usual problems inherent in metrically constructed boundaries.  Unfortunately,
  this comes at a cost. In particular, complete knowledge of the $a$-boundary is reliant 
  on knowing all possible
  envelopments of \man{M}.  This reliance makes it almost impossible to
  construct the complete $a$-boundary for a general space-time.  It should be emphasised,
  however, that this does not, in any way, hinder its utility in the investigation of problems related to singularities  
  (e.g., \cite{Philpot2004}).
  In much the same way as one does not need to know all charts in the atlas of a manifold, so too one does not
  need the full Abstract Boundary to extract information about the `edge' of space-times. This is more a matter
  of representation than of missing information.

  This paper demonstrates that there is a one-to-one
  correspondence between the set of equivalence classes of
  envelopments and a set of equivalence classes of distances on the manifold \man{M}. We give a short example
  of how this correspondence can be used to investigate the $a$-boundary structure of space-times. In a future paper
  the authors will demonstrate that this correspondence can be used to construct the complete
  $a$-boundary from this set of equivalence classes of distances.
  Thus the correspondence provides an alternative method for studying the $a$-boundary.
  While this does not solve the problem mentioned above, it does
  make it more readily accessible.

  We also hope that this correspondence will be of interest to all
  mathematicians desiring to study the `edges' of manifolds. The work below demonstrates that the
  $a$-boundary has a strong relationship to Cauchy structures, in the sense of \cite{LowenColebunders1989}, and thus also
  to the more normal boundaries, e.g., the Stone-\u{C}ech compactification, employed in topology.
  
  Section \ref{Sec.Prelim} introduces the necessary 
  background for the Abstract Boundary.
  Sections \ref{sec.envelopments} and \ref{sec.distances} present equivalence 
  relations on the set of all envelopments of a manifold and a set of distances on a manifold, respectively, and discuss the relation of this work to
  Cauchy structures.  The first relation describes
  when two envelopments provide the same information about the $a$-boundary.  The second relation
  mirrors the ideas of the first, but on a set of distances rather than envelopments.  Section \ref{sec.correspondence}
  gives a one-to-one correspondence between the two sets of equivalence classes, thereby
  showing that what can be constructed using the first can also be constructed using the second. Section \ref{sec.demonstration}
  presents a short example of how this correspondence can be used.

  Our main result, contained in section \ref{sec.correspondence}, is that the set of equivalence classes of envelopments, relevant for the Abstract 
  Boundary, is in one-to-one
  correspondence with a set of equivalence classes of distances. So, in effect, the main 
  result states that, in order to study the Abstract Boundary, one can use either envelopments or a certain subset of distances.
  The ability to employ distances when using the Abstract Boundary to investigate problems should provide both greater flexibility
  and accessibility.
  
  To construct this correspondence we will use three homeomorphisms, between the closures, $\overline{\phi(\man{M})}$, 
  $\overline{\psi(\man{M})}$, of the images of $\man{M}$ under
  equivalent envelopments, $\phi,\psi$, between the Cauchy completion, $\man{M}^d,\man{M}^{d'}$, 
  of $\man{M}$ of equivalent distances, $d,d'$ and between the closure, $\overline{\phi(\man{M})}$, of the image of 
  $\man{M}$ under an envelopment $\phi$ and
  the Cauchy completion of $\man{M}$ with respect to a distance, $d_\phi$, that is related to the envelopment.
  The existence of these homeomorphisms is ensured by propositions \ref{DisEm:Prop.ConditionsForEquivalenceOfEmbeddings},
  \ref{DisEm:Prop.DistanceEquivalenceStatmentsNotInDis} and corollary \ref{CorSusan}. The propositions both show that certain functions have extensions into
  the completion of their domains. It is here
  that the theory of Cauchy spaces underlies our result as the functions we consider are not necessarily uniformly
  continuous and therefore their well known extension theorem does not apply. 
  The needed generalisation of this extension theorem
  is expressed in the language of Cauchy spaces; see subsections \ref{envelopments cauchy}
  and \ref{distance Cauchy}.

\section{Preliminary results and notation}\label{Sec.Prelim}

  We need a few results and definitions from previous papers about
  the Abstract Boundary; they are collected below for the
  convenience of the reader.
  We recommend that the reader refer to the cited papers for a
  detailed introduction to the subject.

  \begin{Def}[see \cite{ScottSzekeres1994}]\label{Def:Envelopment}
    Let $\man{M}$ and $\man{M}'$ be manifolds of the same
    dimension. If there exists $\phi:\man{M}\to\man{M}'$ a $C^{\infty}$
    embedding, then $\man{M}$ is said to be enveloped by
    $\man{M}'$, $\man{M}'$ is the enveloping manifold and $\phi$
    is an envelopment. Since both manifolds have the same dimension, $\phi(\man{M})$
    is open in $\man{M}'$.
  \end{Def}
  
  \begin{Def}[see \cite{ScottSzekeres1994}]
    Let $\phi:\man{M}\to\man{M}_\phi$ be an envelopment.  A non-empty subset $B$ of 
    $\partial(\phi(\man{M}))$ is called a boundary set.
  \end{Def}

  \begin{Def}[see \cite{ScottSzekeres1994}]
    A boundary set $B\subset\partial(\phi(\man{M}))$ is said to cover another boundary set 
    $B'\subset\partial(\psi(\man{M}))$ if and only if for every open neighbourhood $U$ of $B$ in
    $\man{M}_\phi$ there exists an open neighbourhood $V$ of $B'$ in $\man{M}_\psi$ so that
    \[
      \phi\circ\psi^{-1}(V\cap\psi(\man{M}))\subset U.
    \]
    
    We shall denote this partial order by $\covers$, so that $B$ covers $B'$ is written $B\covers B'$.  If a boundary
    set $B$ is such that it consists of a single point, i.e., $B=\{p\}$, then we shall simply write $p\covers B'$ rather than the
    more cumbersome $\{p\}\covers B'$.
  \end{Def}

  \begin{Thm}[see \cite{ScottSzekeres1994}]\label{Thm:Abstract Boundary Covering Relation -
      sequences}
    A boundary set $B_\phi\subset\partial(\phi(\man{M}))$ covers another
    boundary set $B_\varphi\subset\partial(\varphi(\man{M}))$ if and
    only if for every sequence $\{x_i\}\subset\man{M}$ so that
    $\{\varphi(x_i)\}$ has an accumulation point in
    $B_\varphi$,
    the sequence $\{\phi(x_i)\}$ has an accumulation point in
    $B_\phi$.
  \end{Thm}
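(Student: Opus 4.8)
The plan is to prove the two implications separately, preceded by one enabling observation which I would establish first: a boundary set is covered exactly when each of its points is covered. Indeed, if $B_\phi\covers\{q\}$ for every $q\in B_\varphi$, then given any open neighbourhood $U\supseteq B_\phi$ the neighbourhoods $V_q\ni q$ furnished by point-covering have union $V=\bigcup_{q}V_q\supseteq B_\varphi$ with $\phi\circ\varphi^{-1}(V\cap\varphi(\man{M}))\subseteq U$, so $B_\phi\covers B_\varphi$; the converse is immediate, since a neighbourhood of $B_\varphi$ is a neighbourhood of each of its points. This lets me trade neighbourhoods of a whole, possibly non-compact, set for neighbourhoods of a single point, where nested shrinking bases are available. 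I will also use freely that $\man{M}_\phi,\man{M}_\varphi$ are metrizable, locally compact and normal, and that (by Definition~\ref{Def:Envelopment}) each $\phi(x_i)\in\phi(\man{M})$ lies \emph{off} the closed set $\overline{B_\phi}\subseteq\partial(\phi(\man{M}))$.

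For the forward implication, assume $B_\phi\covers B_\varphi$ and let $\{x_i\}$ be such that $\{\varphi(x_i)\}$ accumulates at some $q\in B_\varphi$; passing to a subsequence I may take $\varphi(x_i)\to q$. By the reduction $B_\phi\covers\{q\}$, so for each open $U\supseteq B_\phi$ there is an open $V\ni q$ with $\phi\circ\varphi^{-1}(V\cap\varphi(\man{M}))\subseteq U$; convergence gives $\varphi(x_i)\in V\cap\varphi(\man{M})$ eventually, whence $\phi(x_i)\in U$ eventually. Thus $\{\phi(x_i)\}$ is eventually inside \emph{every} open neighbourhood of $B_\phi$.

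It remains to upgrade this to a genuine accumulation point lying in $B_\phi$, and this is where I expect the main difficulty. If $\{\phi(x_i)\}$ had no ambient accumulation point it would be closed and discrete, and being disjoint from the closed set $\overline{B_\phi}$ it could be separated from $\overline{B_\phi}$ by an open set using normality of $\man{M}_\phi$, contradicting that the sequence enters every neighbourhood of $B_\phi$. Hence an accumulation point $p$ exists, and since the sequence enters the metric tubes $\{y:\rho(y,B_\phi)<1/n\}$ we get $\rho(p,B_\phi)=0$, i.e. $p\in\overline{B_\phi}$. Placing $p$ in $B_\phi$ itself is the delicate point: it is clean when $B_\phi$ is closed, and in general I would carry out the whole argument with $\overline{B_\phi}$ and $\overline{B_\varphi}$, first checking that the covering relation is insensitive to passing to closures.

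For the reverse implication I would argue by contraposition. If $B_\phi$ does not cover $B_\varphi$, the reduction gives a point $q\in B_\varphi$ that is not covered: there is an open $U_0\supseteq B_\phi$ such that no neighbourhood $V$ of $q$ satisfies $\phi\circ\varphi^{-1}(V\cap\varphi(\man{M}))\subseteq U_0$. Taking a nested neighbourhood basis $V_1\supseteq V_2\supseteq\cdots$ of the point $q$ with $\bigcap_n V_n=\{q\}$, I obtain $x_n\in\man{M}$ with $\varphi(x_n)\in V_n\cap\varphi(\man{M})$ but $\phi(x_n)\notin U_0$. Then $\varphi(x_n)\to q\in B_\varphi$, so $\{\varphi(x_i)\}$ accumulates in $B_\varphi$, while $\{\phi(x_i)\}$ avoids the open set $U_0\supseteq B_\phi$ and therefore has no accumulation point in $B_\phi$, contradicting the sequence condition. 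Note that it is precisely the reduction to a single point that dissolves here the non-compactness issue flagged above, since metric proximity to a non-compact $B_\varphi$ alone would not force convergence, whereas shrinking neighbourhoods of $q$ do.
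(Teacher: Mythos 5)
Your enabling observation (covering a set is equivalent to covering each of its points) and your proof of the reverse implication are both correct; note that the paper itself states this theorem without proof, citing Scott and Szekeres, so your argument has to stand on its own. The genuine problem is the forward implication, at exactly the point you flag: your tube argument places an accumulation point $p$ in $\overline{B_\phi}$, not in $B_\phi$, and the repair you propose --- rerunning everything with $\overline{B_\phi}$, $\overline{B_\varphi}$ after ``checking that the covering relation is insensitive to passing to closures'' --- cannot work. The covering relation is \emph{not} insensitive to closures. Take $\man{M}$ to be the open upper half-plane, $\phi=\varphi$ the inclusion into $\mathbb{R}^2$, $B_\varphi=\{(0,0)\}$, and $B_\phi=\{(x,0): 0<|x|\le 1\}$. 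Then $\overline{B_\phi}$ covers $B_\varphi$ (a compact segment inside any open set admits a uniform tube, and a small ball at the origin maps into that tube), but $B_\phi$ does not cover $B_\varphi$: the set $U=\{(x,y): |y|<|x|\}$ is an open neighbourhood of $B_\phi$, yet every neighbourhood of the origin contains half-plane points $(0,\epsilon)$, which are not in $U$. The same example shows that the closed and non-closed versions of the theorem are genuinely different statements (the sequence $(0,1/i)$ accumulates only at $(0,0)$, which lies in $\overline{B_\phi}$ but not in $B_\phi$), so even a proof of the closure version would not yield the theorem as stated.

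The forward implication can be closed, but by a different device: you never need to separate two closed sets (normality); you need an open set around the \emph{arbitrary} set $B_\phi$ avoiding a disjoint \emph{closed} set, and a metric always provides this. Suppose $\varphi(x_i)\to q\in B_\varphi$ and, for contradiction, that $\{\phi(x_i)\}$ has no accumulation point in $B_\phi$. If it has no accumulation point at all, let $C=\{\phi(x_i): i\in\mathbb{N}\}$, a closed set; otherwise pick an accumulation point $a$ (necessarily $a\notin B_\phi$) and a subsequence $\phi(x_{i_k})\to a$, and let $C=\{a\}\cup\{\phi(x_{i_k}): k\in\mathbb{N}\}$, which is compact, hence closed. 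Either way $C$ is closed, disjoint from $B_\phi$ (the sequence terms lie in $\phi(\man{M})$, which is disjoint from $\partial(\phi(\man{M}))$), and contains $\phi(x_i)$ for infinitely many $i$. Fix a metric $\rho$ on $\man{M}_\phi$ and set
\[
  U \;=\; \bigcup_{b\in B_\phi} \left\{ y\in\man{M}_\phi : \rho(y,b)<\tfrac{1}{2}\,\rho(b,C) \right\}.
\]
Then $U$ is open, contains $B_\phi$, and is disjoint from $C$: if $\rho(c,b)<\tfrac{1}{2}\rho(b,C)$ for some $c\in C$, then $\rho(b,C)\le\rho(b,c)<\tfrac{1}{2}\rho(b,C)$, which is impossible since $\rho(b,C)>0$. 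Covering now gives an open $V\supseteq B_\varphi$ with $\phi\circ\varphi^{-1}(V\cap\varphi(\man{M}))\subset U$; since $\varphi(x_i)\to q\in V$, eventually $\phi(x_i)\in U$, contradicting the fact that infinitely many $\phi(x_i)$ lie in $C$. With this substituted for your normality-plus-tubes step, your overall architecture (point reduction, then the two implications) is sound.
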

  
  \begin{Def}[see \cite{ScottSzekeres1994}]
    Two boundary sets $B\subset\partial(\phi(\man{M}))$ and $B'\subset\partial(\psi(\man{M}))$ are equivalent if and only if
    $B\covers B'$ and $B'\covers B$.  We shall denote equivalence by $B\equiv B'$, and, as before, if $B=\{p\}$ then we shall simply write
    $p\equiv B'$ rather than $\{p\}\equiv B'$.
    
    We shall denote the equivalence class of the boundary set $B$ by $[B]$. That is, 
    $[B]=\{B'\subset\partial(\varphi(\man{M})): B'\equiv B\text{, where $\varphi$ is an envelopment of \man{M}}\}$.  As usual, if $B=\{p\}$, then
    we shall write $[p]$ rather than $[\{p\}]$.
  \end{Def}
  
  \begin{Def}[see \cite{ScottSzekeres1994}]
    The Abstract Boundary ($a$-boundary), $\ABound$, of a manifold $\man{M}$ is the set of all equivalence classes of boundary sets that contain a singleton, 
    $p\in\partial(\phi(\man{M}))$.  That is,
    \[
      \ABound=\{[p]:p\in\partial(\phi(\man{M}))\text{, where $\phi$ is an envelopment of $\man{M}$}\}.
    \]
  \end{Def}

  \begin{Def}[see \cite{Ashley2002a,Whale2010}]
    Two boundary sets $B\subset\partial(\phi(\man{M}))$ and $B'\subset\partial(\psi(\man{M}))$ are said to be in contact if for all
    open neighbourhoods $U$ of $B$ and $V$ of $B'$ we have that
    \[
      \phi^{-1}\left(U\cap\phi(\man{M})\right)\cap\psi^{-1}\left(V\cap\psi(\man{M})\right)\neq\EmptySet.
    \] 
  \end{Def}
  
  \begin{Lem}[see \cite{Ashley2002a,Whale2010}]
    Let $B\subset\partial(\phi(\man{M}))$ and $B'\subset\partial(\psi(\man{M}))$ be boundary sets. Then
    $B$ and $B'$ are in contact if and only if there exists a sequence $\{x_i\}$ in $\man{M}$ so that
    $\{\phi(x_i)\}$ has a limit point in $B$ and $\{\psi(x_i)\}$ has a limit point in $B'$.
  \end{Lem}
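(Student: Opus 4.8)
The plan is to prove the two implications separately, working throughout with fixed metrics $d_\phi$ and $d_\psi$ on the enveloping manifolds $\man{M}_\phi$ and $\man{M}_\psi$ (these exist because manifolds are metrizable). This lets me replace the arbitrary open sets witnessing contact by a single countable family of metric neighbourhoods. The overall structure mirrors the sequence characterisation of the covering relation stated above: in one direction one extracts a sequence in \man{M} whose images are forced into ever-smaller neighbourhoods of $B$ and $B'$, and in the other one feeds such a sequence back into the definition of contact.

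For the easier implication, suppose a sequence $\{x_i\}\subset\man{M}$ is given, together with a subsequence along which $\phi(x_i)$ converges to some $p\in B$ and $\psi(x_i)$ converges to some $q\in B'$; this is how I read the two images having the stated limit points, namely realised along a common subsequence. Let $U$ be any open neighbourhood of $B$ and $V$ any open neighbourhood of $B'$. Since $p\in B\subset U$ and $U$ is open, convergence gives $\phi(x_i)\in U$ for all large $i$, and likewise $\psi(x_i)\in V$ for all large $i$. Choosing a single index $i$ large enough for both, the point $x_i$ lies in $\phi^{-1}(U\cap\phi(\man{M}))\cap\psi^{-1}(V\cap\psi(\man{M}))$, so this set is non-empty. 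As $U$ and $V$ were arbitrary, $B$ and $B'$ are in contact.

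For the converse, assume $B$ and $B'$ are in contact. For each $n$ let $U_n$ and $V_n$ be the open $1/n$-neighbourhoods of $B$ and $B'$ with respect to $d_\phi$ and $d_\psi$. Applying the contact hypothesis to $U_n$ and $V_n$ produces a point $x_n\in\phi^{-1}(U_n\cap\phi(\man{M}))\cap\psi^{-1}(V_n\cap\psi(\man{M}))$, and hence $d_\phi(\phi(x_n),B)<1/n$ and $d_\psi(\psi(x_n),B')<1/n$. It then remains to extract from $\{x_n\}$ a subsequence whose $\phi$-image converges to a point of $B$ and whose $\psi$-image converges to a point of $B'$: picking $b_n\in\overline{B}$ with $d_\phi(\phi(x_n),b_n)\to0$ and invoking compactness, one passes to a subsequence with $\phi(x_{n})\to b\in\overline{B}$, and then to a further subsequence along which $\psi(x_{n})\to b'\in\overline{B'}$; the resulting diagonal subsequence is the required sequence.

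I expect the extraction step in the converse to be the main obstacle, for two reasons. First, the mere accumulation of $\{\phi(x_n)\}$ near $B$ must be promoted to genuine convergence to a point of $B$, which requires $B$ to be closed and its closure precompact in $\man{M}_\phi$ (otherwise the images could drift off to infinity while hugging $B$). Second, and more importantly, the \emph{same} subsequence must serve both maps, which is why the argument must proceed by one nested diagonal extraction rather than treating $\phi$ and $\psi$ independently. This simultaneity is precisely the feature that distinguishes contact from the weaker situation in which $\phi(x_i)$ and $\psi(x_i)$ accumulate at $B$ and $B'$ along unrelated index sets, and it is the point at which the proof must be argued with care.
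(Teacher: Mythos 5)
The paper never proves this lemma: it is imported verbatim from the cited theses as a preliminary, so there is no in-paper argument to compare against and your proposal has to stand on its own.

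Your first direction is correct, and it is worth stressing that your common-subsequence reading is not merely a convenient choice but the only one under which the lemma can possibly be an equivalence. With the paper's literal convention (``has a limit point in $B$'' means some point of $B$ is an accumulation point, with no relation between the two index sets), the right-hand side holds for \emph{every} pair of non-empty boundary sets: pick $p\in B$, $q\in B'$, sequences $\{u_i\}$, $\{v_i\}$ in $\man{M}$ with $\phi(u_i)\to p$ and $\psi(v_i)\to q$ (these exist since $p,q$ are boundary points), and interleave them. Yet, for example, $B=\{0\}$ and $B'=\{1\}$ for the identity envelopment $(0,1)\to\mathbb{R}$ are not in contact. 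So simultaneity is forced, and with it your argument for that implication is complete.

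The converse is where the genuine gap lies; you diagnose it correctly but do not close it, and in fact it cannot be closed at the stated level of generality. Two things go wrong for non-compact $B$. First, as you say, $d_\phi(\phi(x_n),B)\to 0$ gives no accumulation point in $B$, or even in $\overline{B}$, unless $B$ is closed with compact closure; but the paper's boundary sets are arbitrary non-empty subsets of $\partial(\phi(\man{M}))$. Second, for non-compact $B$ the uniform $1/n$-neighbourhoods do not form a neighbourhood basis of $B$ (no uniform neighbourhood fits inside an open set that pinches towards $B$), so by invoking contact only for $U_n$ and $V_n$ you discard precisely the strength of the hypothesis that any correct proof would have to exploit. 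Worse, the statement itself fails without compactness: take $\man{M}$ the open upper half-plane, $\phi=\psi$ the identity embedding into $\mathbb{R}^2$, $B=\{(x,0):x\neq 0\}$ and $B'=\{(0,0)\}$. Given open $U\supset B$ and $V\ni(0,0)$, choose $\epsilon$ with the $\epsilon$-ball about the origin inside $V$; since $(\epsilon/2,0)\in B\subset U$ and $U$ is open, some $(\epsilon/2,\delta)$ with $0<\delta<\epsilon/2$ lies in $U\cap V\cap\man{M}$, so $B$ and $B'$ are in contact. But a common subsequence converging under $\phi$ to a point of $B$ and under $\psi$ to the origin is impossible, since $\phi=\psi$ and limits in $\mathbb{R}^2$ are unique. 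Hence no sequence of the required kind exists, and the converse implication is false for such boundary sets. Your proof therefore establishes the lemma exactly in the compact case --- in particular for singletons, which is the only case the paper actually uses (the sequence characterisation of covering, and the Misner example in the demonstration section) --- and the obstacle you flagged is a defect of the general statement, not a removable flaw in your extraction argument.
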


  \begin{Def}[see \cite{Ashley2002a,Whale2010}]
    Let $\phi:\man{M}\to\man{M}_\phi$ be an envelopment, then the
    set $\sigma_\phi=\{[p]\in\ABound:p\in\partial(\phi(\man{M}))\}$,
    is the partial cross section of $\phi$.
  \end{Def}

  We remind the reader that if $\seq s$ is a sequence in \man{M}, we mean that
  $\seq s\subset\man{M}$ and that $\seq s$ is countable.  We will not
  worry about a specific ordering of $\seq s$\footnote{The reason for this will be made clear in definition \ref{DisEm:Def.CauchyCompletion}.}.
  
  By $\seq s\to A$ we mean that there exists $x\in A$, a not 
  necessarily
  unique, accumulation point of $\seq s$. Where $A=\{x\}$ we shall write $\seq s\to x$. The reason for this non-standard notation is that,
  as points and sets are treated equivalently with respect to the Abstract Boundary, we are often interested in showing that a sequence has at least 
  one limit point
  in a particular set but not that the sequence only has limit points in that set. 
  
  We will sometimes say that a sequence has a limit point. By this we mean only that a limit point exists. 
  Where we need to mention unique limit points we say that the sequence $\seq s$ converges to $x$ or that
  $\seq s$ has the unique limit point $x$.

\section{An equivalence on the set of envelopments}\label{sec.envelopments}

    We wish to define an equivalence relation that tells us when
    two envelopments are equivalent from the point of view of the
    Abstract Boundary: that is, when they produce the same Abstract Boundary points.
    There is a natural way to express this equivalence.

  \begin{Def}\label{def.equivalentenvelopments}
    Let $\man{M}$ be a manifold, let $\phi:\man{M}\to\man{M}_\phi$
    and $\psi:\man{M}\to\man{M}_\psi$ be two envelopments of
    $\man{M}$. The envelopments $\phi$ and $\psi$ are equivalent,
    $\phi\simeq\psi$, if and only if $\sigma_\phi=\sigma_\psi$.
    That is, $\phi\simeq\psi$ if and only if for all $x\in\partial(\phi(\man{M}))$
    there exists $y\in\partial(\psi(\man{M}))$ so
    that $[y]=[x]$ and likewise for all $y\in\partial(\psi(\man{M}))$ there exists $x\in\partial(\phi(\man{M}))$ 
    so that $[x]=[y]$.
  \end{Def}

  \begin{Prop}
    The equivalence relation $\simeq$ is well defined on the set of all envelopments.
  \end{Prop}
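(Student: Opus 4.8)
The plan is to recognise that $\simeq$ is defined as the pullback of the equality relation under the assignment $\phi\mapsto\sigma_\phi$, and hence to verify reflexivity, symmetry and transitivity directly from the corresponding properties of set equality. The key observation is that each partial cross section $\sigma_\phi=\{[p]\in\ABound:p\in\partial(\phi(\man{M}))\}$ is a genuine subset of the single fixed set $\ABound$. This is where the earlier fact that $\equiv$ is an equivalence relation on boundary sets enters: it guarantees that the classes $[p]$ are well defined and that $\sigma_\phi$ is unambiguously determined by $\phi$, so that the statement $\sigma_\phi=\sigma_\psi$ is a bona fide equality of subsets of $\ABound$.

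First I would record that the two formulations given in Definition \ref{def.equivalentenvelopments} genuinely coincide. Writing $\sigma_\phi=\sigma_\psi$ as a double inclusion, the inclusion $\sigma_\phi\subseteq\sigma_\psi$ says exactly that every class $[x]$ with $x\in\partial(\phi(\man{M}))$ equals some class $[y]$ with $y\in\partial(\psi(\man{M}))$, while $\sigma_\psi\subseteq\sigma_\phi$ is the reverse statement; together these reproduce the ``for all $\dots$ there exists $\dots$'' phrasing of the definition. This step is purely a matter of unpacking notation.

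With this in hand the three axioms are immediate. Reflexivity holds because $\sigma_\phi=\sigma_\phi$ for every envelopment $\phi$ (equivalently, for each $x$ one may take $y=x$, invoking reflexivity of $\equiv$). Symmetry is built into the definition, since $\sigma_\phi=\sigma_\psi$ is logically identical to $\sigma_\psi=\sigma_\phi$. Transitivity follows from transitivity of set equality: if $\sigma_\phi=\sigma_\psi$ and $\sigma_\psi=\sigma_\chi$ then $\sigma_\phi=\sigma_\chi$, so $\phi\simeq\psi$ together with $\psi\simeq\chi$ yields $\phi\simeq\chi$.

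Rather than a genuine obstacle, the only point to watch is the well-definedness of the classes $[p]$ themselves, together with the observation above that the two descriptions in the definition agree. Because $\simeq$ is phrased entirely in terms of equality of sets of $a$-boundary points, the whole argument reduces to the fact that equality is an equivalence relation; no property peculiar to envelopments, nor any feature of the covering relation $\covers$ beyond the already-established fact that $\equiv$ partitions the boundary sets, is required. I therefore expect the proof to be short, its only substance being the verification that the two formulations in Definition \ref{def.equivalentenvelopments} describe the same relation.
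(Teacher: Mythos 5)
Your proposal is correct and follows essentially the same route as the paper, whose entire proof is the one-line observation that $=$ is a well defined equivalence relation on $\ABound$; your argument is simply a fleshed-out version of that, pulling back reflexivity, symmetry and transitivity of set equality through $\phi\mapsto\sigma_\phi$. The extra care you take (checking the two phrasings of Definition \ref{def.equivalentenvelopments} agree, and that the classes $[p]$ are well defined via $\equiv$) is implicit in the paper's appeal to $\ABound$ being a well defined set of equivalence classes.
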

  \begin{proof}
    This follows from the fact that $=$ is a well defined
    equivalence relation on \ABound.
  \end{proof}
  
    Looking ahead, however, we shall be working with distances on \man{M} and will need a different, yet equivalent, 
    definition that is easier to use in this setting.  With this in mind we provide the following definition and result.
    
    \begin{Def}
    Let $\Seqz=\{\seq{s}: \seq{s}$ is a sequence in $\man{M}$ with no limit points in
    $\man{M}\}$.  Where the context is clear, we will drop the
    $\man{M}$ and simply write $\Sigma_0$.
    Let $\phi:\man{M}\to\man{N}$
    be an envelopment and
    $A\subset\partial(\phi(\man{M}))$.
    Define $\Seqs{\phi}{A}$ to be the set $\{\seq{s}\in\Seqz:$
    $\phi(\seq{s})\to
    A\}$.  We will often be interested in the case when
    $A=\{a\}$, where $a\in\partial(\phi(\man{M}))$, and will write $\Seqs{\phi}{a}$ rather
    than $\Seqs{\phi}{\{a\}}$.
  \end{Def}
  
    The following lemma will allow us to give definition \ref{def.equivalentenvelopments}
    in terms of sequences
    in \man{M}.

  \begin{Lem}\label{DisEm:Lem.SequenceDefinitionOfEmbeddingContainment}
    Let $\man{M}$ be a manifold, let $\phi:\man{M}\to\man{M}_\phi$
    and $\psi:\man{M}\to\man{M}_\psi$ be two envelopments of
    $\man{M}$. Then $\sigma_{\phi}\subset \sigma_{\psi}$
    if and only if for all
    $x\in\bound[\phi]{M}$ there exists $y\in\bound[\psi]{M}$ so
    that $\Seqs{\phi}{x}=\Seqs{\psi}{y}$\footnote{Note that this lemma indicates that if $\sigma_{\phi}\subset \sigma_{\psi}$ then
                the Cauchy structure on $\man{M}$ given by $\psi$ is both larger than and compatible with the
                Cauchy structure on $\man{M}$ given by $\phi$.}.
  \end{Lem}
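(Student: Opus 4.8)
The plan is to reduce everything to the sequential characterisation of the covering relation provided by theorem~\ref{Thm:Abstract Boundary Covering Relation - sequences}, and then to observe that the partial cross section condition $\sigma_{\phi}\subset\sigma_{\psi}$ is, by the definition of $\sigma_{\psi}$, nothing more than a statement about the equivalence of individual boundary points. The central technical claim I would isolate first is the following: for any $x\in\bound[\phi]{M}$ and $y\in\bound[\psi]{M}$,
\[
  x\covers y \quad\Longleftrightarrow\quad \Seqs{\psi}{y}\subset\Seqs{\phi}{x}.
\]
Granting this, the equivalence $[x]=[y]$ (that is, $x\covers y$ and $y\covers x$) becomes exactly the equality $\Seqs{\phi}{x}=\Seqs{\psi}{y}$, and the lemma will follow by unwinding definitions.

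For the central claim, theorem~\ref{Thm:Abstract Boundary Covering Relation - sequences} states that $x\covers y$ holds precisely when every sequence $\seq{s}$ in \man{M} with $\psi(\seq{s})\to y$ also satisfies $\phi(\seq{s})\to x$. The forward implication is then immediate: restricting attention to those $\seq{s}$ that additionally lie in \Seqz\ yields $\Seqs{\psi}{y}\subset\Seqs{\phi}{x}$ at once. The substance is the converse, where I must remove the restriction to \Seqz\ that is built into the definition of $\Seqs{\psi}{y}$, since theorem~\ref{Thm:Abstract Boundary Covering Relation - sequences} quantifies over \emph{all} sequences in \man{M}.

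Here the argument is a subsequence and Hausdorff argument. Given an arbitrary $\seq{s}$ with $\psi(\seq{s})\to y$, I would pass to a subsequence $\seq{t}$ for which $\psi(\seq{t})$ actually converges to $y$; such a subsequence exists because $y$ is, by definition of the notation $\to$, an accumulation point. Since $\psi$ is a homeomorphism onto the open set $\psi(\man{M})$ and $y\in\bound[\psi]{M}$ lies outside it, no accumulation point of $\seq{t}$ can lie in \man{M}: any such point would have a $\psi$-image that is an accumulation point of the convergent sequence $\psi(\seq{t})$, hence equal to $y$ by uniqueness of limits in the Hausdorff manifold $\man{M}_\psi$, contradicting $y\notin\psi(\man{M})$. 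Thus $\seq{t}\in\Seqz$, so $\seq{t}\in\Seqs{\psi}{y}\subset\Seqs{\phi}{x}$, giving $\phi(\seq{t})\to x$ and therefore $\phi(\seq{s})\to x$, as required.

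It then remains to assemble the pieces. Applying the central claim with the roles of $x$ and $y$ interchanged shows $[x]=[y]$ if and only if $\Seqs{\phi}{x}=\Seqs{\psi}{y}$. By the definition of $\sigma_{\phi}$ as the set of classes $[p]$ with $p\in\bound[\phi]{M}$, the inclusion $\sigma_{\phi}\subset\sigma_{\psi}$ says exactly that each $x\in\bound[\phi]{M}$ admits some $y\in\bound[\psi]{M}$ with $[x]=[y]$; translating every such equality through the preceding sentence produces the stated condition. I expect the main obstacle to be precisely the bookkeeping around \Seqz: the covering theorem ranges over all sequences in \man{M}, whereas $\Seqs{\phi}{x}$ records only those with no interior accumulation points, and the subsequence argument of the third paragraph is what reconciles the two.
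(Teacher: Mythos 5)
Your proof is correct and follows essentially the same route as the paper's: both rest on the sequential characterisation of covering (theorem \ref{Thm:Abstract Boundary Covering Relation - sequences}) together with the key subsequence-plus-Hausdorff extraction showing that any sequence whose image accumulates at a boundary point admits a subsequence lying in \Seqz\ with the same property. Your only departures are organisational---you isolate the biconditional $x\covers y\Leftrightarrow\Seqs{\psi}{y}\subset\Seqs{\phi}{x}$ before unwinding $\sigma_\phi\subset\sigma_\psi$, and you spell out the extraction step that the paper asserts without justification---but the mathematical content is identical.
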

  \begin{proof}
    Suppose that $\sigma_\phi\subset\sigma_\psi$. Let
    $x\in\bound[\phi]{M}$, then $[x]\in\sigma_\phi$, so there exists
    $y\in\bound[\psi]{M}$ so that $[x]=[y]$; that is, $x\equiv y$.
    Now, let $\seq s\in \Seqs{\phi}{x}$, then $\phi(\seq s)\to x$, but $y\covers
    x$, so $\psi(\seq s)\to y$ and $\seq s\in \Seqs{\psi}{y}$. Therefore
    $\Seqs{\phi}{x}\subset \Seqs{\psi}{y}$. Since $x\covers y$ we can use
    the same argument to show that
    $\Seqs{\psi}{y}\subset \Seqs{\phi}{x}$ and hence $\Seqs{\phi}{x}=\Seqs{\psi}{y}$
    as required.

    Now, suppose that for all
    $x\in\bound[\phi]{M}$ there exists $y\in\bound[\psi]{M}$ so
    that $\Seqs{\phi}{x}=\Seqs{\psi}{y}$. Let $[x]\in\sigma_\phi$, where $x\in\partial(\phi(\man{M}))$, and let
    $y\in\bound[\psi]{M}$ be such that $\Seqs{\phi}{x}=\Seqs{\psi}{y}$. Let
    $\seq q\subset\man{M}$ be a sequence so that $\phi(\seq q)\to x$, then there
    exists $\seq s\in \Seqz$ so that
    $\seq s\subset\seq q$ and $\phi(\seq s)\to x$. Then, by
    construction, $\psi(\seq s)\to y$, so that $\psi(\seq q)\to y$,
    and hence $y\covers x$. Since
    $\Seqs{\phi}{x}=\Seqs{\psi}{y}$ we can use the same argument
    to show that
    $x\covers y$ and therefore
    $[x]=[y]$.  Thus $\sigma_\phi\subset
    \sigma_\psi$ as required.
  \end{proof}
  
    Next we establish a collection of equivalent definitions of
    $\simeq$.
  
  \begin{Prop}\label{DisEm:Prop.ConditionsForEquivalenceOfEmbeddings}
    Let $\man{M}$ be a manifold, let $\phi:\man{M}\to\man{M}_\phi$
    and $\psi:\man{M}\to\man{M}_\psi$ be two envelopments of
    $\man{M}$.
    The following are equivalent:
    \begin{enumerate}
      \item The envelopments $\phi$ and $\psi$ are equivalent, $\phi\simeq\psi$.
      \item There exists a homeomorphism,
        $f:\overline{\phi(\man{M})}\to
        \overline{\psi(\man{M})}$, so that $f\phi=\psi$.
      \item For all
        $x\in\bound[\phi]{M}$ there exists $y\in\bound[\psi]{M}$ so
        that $\Seqs{\phi}{x}=\Seqs{\psi}{y}$ and for all
        $p\in\bound[\psi]{M}$ there exists $q\in\bound[\phi]{M}$ so
        that $\Seqs{\psi}{p}=\Seqs{\phi}{q}$.
    \end{enumerate}
  \end{Prop}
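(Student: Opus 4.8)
The plan is to close the cycle by proving (1)$\Leftrightarrow$(3) directly from the preceding lemma, (2)$\Rightarrow$(3) by transporting sequences through the homeomorphism, and (3)$\Rightarrow$(2) by explicitly constructing the required homeomorphism as an extension of $\psi\circ\phi^{-1}$. The first two directions are essentially bookkeeping; the construction in the third is where the genuine work lies.

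For (1)$\Leftrightarrow$(3) I would simply unwind definitions. By definition \ref{def.equivalentenvelopments}, $\phi\simeq\psi$ means $\sigma_\phi=\sigma_\psi$, i.e. $\sigma_\phi\subset\sigma_\psi$ together with $\sigma_\psi\subset\sigma_\phi$. Lemma \ref{DisEm:Lem.SequenceDefinitionOfEmbeddingContainment} rephrases each of these two inclusions as exactly one of the two clauses of (3), so (1) and (3) are nothing more than the conjunction of the two instances of that lemma.

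For (2)$\Rightarrow$(3), suppose $f$ is a homeomorphism with $f\phi=\psi$. Since $\phi$ and $\psi$ are bijections onto their (open) images and $f\phi=\psi$, the map $f$ carries $\phi(\man{M})$ onto $\psi(\man{M})$, and hence restricts to a homeomorphism $\bound[\phi]{M}\to\bound[\psi]{M}$, these boundaries being the complements of the images in the respective closures. Because $f$ and $f^{-1}$ are continuous on spaces which are first countable (they sit inside manifolds), they send accumulation points of a sequence to accumulation points of its image; combined with $\psi(\seq s)=f(\phi(\seq s))$ this gives $\phi(\seq s)\to x \Leftrightarrow \psi(\seq s)\to f(x)$. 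Hence $\Seqs{\phi}{x}=\Seqs{\psi}{f(x)}$ for every $x\in\bound[\phi]{M}$, and the symmetric clause follows by running the argument with $f^{-1}$, which is precisely (3).

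The real obstacle is (3)$\Rightarrow$(2). I would define $f:=\psi\circ\phi^{-1}$ on $\phi(\man{M})$, which is already a homeomorphism onto $\psi(\man{M})$, and extend it to the boundary by sending $x\in\bound[\phi]{M}$ to the point $y\in\bound[\psi]{M}$ supplied by (3) with $\Seqs{\phi}{x}=\Seqs{\psi}{y}$. Well-definedness reduces to uniqueness of $y$: if $\Seqs{\psi}{y_1}=\Seqs{\psi}{y_2}$, then choosing (by first countability) a sequence $\seq s$ in $\man{M}$ with $\psi(\seq s)$ converging to $y_1$ and having $y_1$ as its only accumulation point yields $\seq s\in\Seqs{\psi}{y_1}=\Seqs{\psi}{y_2}$, so $y_2$ is also an accumulation point and the Hausdorff property forces $y_1=y_2$. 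The hard step is continuity of the extended $f$ at boundary points: because $\psi\circ\phi^{-1}$ need not be uniformly continuous, the classical metric-extension theorem does not apply, and I would instead appeal to the Cauchy-space extension machinery of subsection \ref{envelopments cauchy}, the condition $\Seqs{\phi}{x}=\Seqs{\psi}{y}$ being exactly the compatibility of Cauchy structures that the extension theorem requires. Running the same construction from the second clause of (3) produces a continuous inverse, so $f$ is the sought homeomorphism with $f\phi=\psi$, completing the equivalence.
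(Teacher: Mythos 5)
Your decomposition ($1\Leftrightarrow 3$ via the lemma, $2\Rightarrow 3$ by transporting sequences, $3\Rightarrow 2$ as the hard direction) is logically sound and your argument is correct in outline, but you should be aware that it reproduces the paper's \emph{alternative} proof (subsection \ref{envelopments cauchy}) rather than its primary one. The paper's primary proof runs the cycle as $1\Leftrightarrow 3$, $1\Rightarrow 2$, $2\Rightarrow 3$, and does the hard direction entirely by hand: it defines $f$ on $\bound[\phi]{M}$ exactly as you do (via the matching boundary point) and then establishes sequential continuity directly, in a long five-part argument with nested sequences $\{y^i_j\}$ and separating neighbourhoods $V_i$, using only theorem \ref{Thm:Abstract Boundary Covering Relation - sequences} and Hausdorffness --- no Cauchy-space machinery at all. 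Your route instead delegates the continuity to the extension theorem \ref{newTHerem}, which is precisely what the paper does in its alternative proof of $\mathbf{1\Rightarrow 2}$ (there the hypothesis $1$ is first converted into the first clause of $3$ by lemma \ref{DisEm:Lem.SequenceDefinitionOfEmbeddingContainment}, so the difference between your $3\Rightarrow 2$ and the paper's $1\Rightarrow 2$ is immaterial). The trade-off between the paper's two proofs is self-containedness versus brevity, and your proposal sits squarely on the brevity side.

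Two points in your $3\Rightarrow 2$ deserve attention. First, once you invoke theorem \ref{newTHerem} you need not define $f$ on the boundary at all, nor prove well-definedness: the theorem supplies a unique continuous extension $\hat f$ of $\psi$ (viewed as a Cauchy continuous map $\man{M}\to\overline{\psi(\man{M})}$) with $\hat f\phi=\psi$, and uniqueness together with the symmetric construction from the second clause of $3$ forces $\hat f$ to be a homeomorphism; your explicit boundary assignment is then a consequence of the construction rather than an ingredient of it. Second --- and this is the one genuine soft spot --- the condition $\Seqs{\phi}{x}=\Seqs{\psi}{y}$ is not quite ``exactly'' the Cauchy continuity the theorem requires. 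In the paper's conventions $\psi(\seq s)\to y$ only asserts that $y$ is an accumulation point of $\psi(\seq s)$, whereas Cauchy continuity demands that $\psi(\seq s)$ actually converge (uniquely) in $\overline{\psi(\man{M})}$. The upgrade is a short subsequence argument: if $\phi(\seq s)$ converges to $x\in\bound[\phi]{M}$, then every subsequence $\seq p$ of $\seq s$ also lies in $\Seqs{\phi}{x}=\Seqs{\psi}{y}$, so no subsequence of $\psi(\seq s)$ can diverge or remain outside a neighbourhood of $y$, and convergence to $y$ follows. The paper's alternative proof devotes a paragraph to exactly this step; with it included, your plan goes through.
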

  
\begin{proof}
    \begin{description}
      \item[$\mathbf{3\Leftrightarrow 1}$] Apply lemma \ref{DisEm:Lem.SequenceDefinitionOfEmbeddingContainment}
        twice.

      \item[$\mathbf{1 \Rightarrow 2}$] Since $\phi\simeq\psi$ we know that
        $\sigma_\phi=\sigma_\psi$.
        Therefore for each $x\in\bound[\phi]{M}$ there exists
        $y\in\bound[\psi]{M}$ so that $x \equiv y$; denote this $y$ by $x_\psi$. Define
        $f:\overline{\phi\man{M}}\to\overline{\psi\man{M}}$ by
        \[
          f(x)=\left\{
          \begin{aligned}
            &\psi\phi^{-1}(x) &&\text{if}\ x\in \phi\man{M}\\
            &x_\psi &&\text{otherwise.}
          \end{aligned}
          \right.
        \]
        Since $\psi\phi^{-1}$ is a homeomorphism,
        we need only consider $f|_{\bound[\phi]{M}}$.

        We need to show that $f$ is well defined.
        Suppose that
        there exist $u,v\in\bound[\psi]{M}$ so that $x\equiv u$ and
        $x\equiv v$, where $x\in\bound[\phi]{M}$. Since, $\equiv$ is an equivalence relation
        $u\equiv
        v$. By theorem \ref{Thm:Abstract Boundary Covering Relation -
      sequences} and as $\man{M}_\psi$ is hausdorff, we can conclude that $u=v$. Therefore
        $f$ is well defined.

        We need to show that $f$ is surjective. Let
        $y\in\bound[\psi]{M}$ then, since $\phi\simeq\psi$, there
        exists $x\in\bound[\phi]{M}$ so that $[y]=[x]$ and hence
        $f(x)=y$. Therefore $f$ is surjective.

        We need to show that $f$ is injective. Suppose
        that $x,y\in\bound[\phi]{M}$ are such that $f(x)=f(y)$, then $x\equiv
        f(x)=f(y)\equiv y$ so that $x\equiv y$. As before by theorem \ref{Thm:Abstract Boundary Covering Relation -
      sequences} and as
        $\man{M}_\phi$ is
        hausdorff, we know that $x=y$.

        We need to show that $f$ is continuous. Since $\man{M}$ is first countable
        we can do this by showing that $f$ is sequentially continuous. 
        The proof that $f$ is sequentially continuous is long. It is divided into five sections.
        In the first section we show that $f$  is continuous on $\phi\man{M}$. The second
        section shows that any for any sequence $\{x_i\}\subset\phi\man{M}$ with unique limit point
        $x\in \partial(\phi(\man{M}))$ we have that $\{f(x_i)\}$ converges to $f(x)$. The arguments of the
        section also shows that a similar statement holds for $f^{-1}$. The third section shows that
        for any sequence $\{x_i\}\subset\partial(\phi(\man{M}))$ converging to $x$, necessarily
        in $\partial(\phi(\man{M}))$, it is the case that $f(x)$ is a limit point of
        $\{f(x_i)\}$. The fourth section shows that the sequence $\{f(x_i)\}$ of the third section
        uniquely converges to $f(x)$. The fifth section considers sequences in $\overline{\phi\man{M}}$
        whose elements are not restricted to either $\phi\man{M}$ or $\partial\phi\man{M}$. These
        arguments demonstrate that $f$ is sequentially continuous and therefore continuous.
        Since we shall repeat the arguments of earlier paragraphs in later sections we will number all
        paragraphs to make reference to the arguments easier.

        \textbf{1} First, since $f$ restricted to $\phi\man{M}$ is
        $\psi\phi^{-1}$ we only need consider sequences that converge to
        points in $\bound[\phi]{\man{M}}$.

        \textbf{2} Second, suppose that $\{x_i\}\subset\phi\man{M}$
        converges to $x\in\bound\phi\man{M}$.  Since
        $f(x)\equiv x$ we know that $\{f(x_i)\}$ must have $f(x)$
        as a limit point (by theorem \ref{Thm:Abstract Boundary Covering Relation -
        sequences}). Any subsequence $\{p_i\}$ of $\{x_i\}$ must
        also be such that $\{f(p_i)\}$ has $f(x)$ as a limit
        point by the same reasoning, since $\{p_i\}$ must converge uniquely to $x$.
        We will show
        that $f(x)$ is unique. Suppose that there exists $q\in\overline{\psi\man{M}}$ and a
        subsequence $\{q_i\}$ of $\{x_i\}$ so that $\{f(q_i)\}$
        converges to $q$.  Since $\{q_i\}\subset\{x_i\}$ we know
        that $\{q_i\}$ converges to $x$ and as $f(x)\equiv x$ we
        know that $f(x)$ is an accumulation point of
        $\{f(q_i)\}$ but, by construction, $\{f(q_i)\}$ converges to $q$.
        Therefore $q=f(x)$. Thus for all sequences $\seq s$ lying in $\phi\man{M}$ so that 
        $\seq s\to x\in\partial\phi\man{M}$ uniquely, we know that $f(\seq s)\to f(x)$ uniquely.  Since the argument of this
        paragraph can also be applied to $f^{-1}$ we know that for all sequences $\seq s\subset\psi\man{M}$ so that
        $\seq s\to y\in\partial\psi\man{M}$ uniquely, we have that $f^{-1}(\seq s)\to f^{-1}(y)$ uniquely.  We use these facts below.

        \textbf{3} Third, suppose that $\{x_i\}$ is a sequence in
        $\bound[\phi]{\man{M}}$ converging to $x$ and suppose that
        $\{f(x_i)\}$ has no limit points, then we can choose an
        open neighbourhood $V$ of $f(x)$ so that for
        all $i$, $f(x_i)\not\in V$.  For each $i$ choose a
        sequence $\{f(y_j^i)\}\subset \psi\man{M}$ that converges uniquely
        to $f(x_i)$.  As $\{f(y_j^i)\}$ converges to $f(x_i)$ we
        know that $\{f(y_j^i)\}\cap V$ must be finite, hence without
        loss of generality we may assume that for all $i,j$, $\{f(y_j^i)\}\cap
        V=\EmptySet$ and therefore that for all $i,j$,
        $f(y_j^i)\not\in V$. Using the same techniques as in paragraph \textbf{2} we can show
        that each $\{y_j^i\}$ must converge uniquely to $x_i$.  Form a
        new sequence, $\{s_k\}=\bigcup_{i,j}\{y_j^i\}$.  By
        construction, since each $\{y_j^i\}$ converges to $x_i$, we know
        that $\{s_k\}$ has $x$ as a limit point and as
        $\{s_k\}\subset\phi\man{M}$ we know that $\{f(s_k)\}$ has
        $f(x)$ as a limit point.  This implies that there exists
        an infinite subsequence, $\{f(q_l)\}$ of $f(s_k)$ so that
        for all $l$, $f(q_l)\in V$.  This is a contradiction,
        however, as for each $l$ there exists $i_l$ and $j_l$ so
        that $q_l=y^{i_l}_{j_l}$ where we know that $f(q_l)\in V$ and
        that $f(y^{i_l}_{j_l})\not\in V$. Therefore $\{f(x_i)\}$ must
        have $f(x)$ as a limit point.

        \textbf{4} Fourth, suppose that $\{f(x_i)\}$ also has a limit point
        $q\in\bound[\phi]{\man{M}}$, where $\{x_i\}$ is the sequence of paragraph \textbf{3}.
        We may choose a sequence of
        open
        neighbourhoods, $V_i$, so that $x_i\in V_i$, for all
        $j\not = i$, $x_j\not\in V_i$ and for all $i, j,\ i\not = j$,
        $\overline{V_i}\cap\overline{V_j}=\EmptySet$.  Let
        $\{y_j^i\}\subset\phi\man{M}$ be a sequence that
        converges uniquely to $x_i$ and is such that for all $j$,
        $y^i_j\in V_i$. Let
        $\{s_k\}=\bigcup_{i,j}\{y_j^i\}$ be a new sequence formed
        from the union of the $y_j^i$'s.  Since $\{f(x_i)\}$ has
        $q$ as a limit point we know that $\{f(s_k)\}$ must also
        have $q$ as a limit point.  From paragraph \textbf{2} 
        we know that $f^{-1}(q)$ must be a limit point of
        $\{s_k\}$. By construction this implies that $f^{-1}(q)$
        is either equal to $x_i$ for some $i$ or equal to $x$.

        \textbf{5} If $f^{-1}(q)=x$ then we are done, so suppose that there
        exists $l$ so that $q=f(x_l)$.  Since $q$ is a limit point of $\{f(x_i)\}$
        we can choose a subsequence, $\{q_r\}$ of
        $\{f(s_k)\}$ so that $q_r\in\{f(y_j^r)\}$ and $\{q_r\}$ uniquely converges to
        $q$. From paragraph \textbf{2} we know that $\{f^{-1}(q_r)\}$ must have
        $f^{-1}(q)=x_l$ as a unique limit point. This implies that
        $\{f^{-1}(q_r)\}\cap V_l$ must be infinite. But by
        construction we know that for all $r\not =l$,
        $y^r_j\not\in V_l$, and since $q_r=y^r_{j}$ for some $j$
        we know that $\{f^{-1}(q_r)\}\cap V_l$ is either empty or contains
        only the element $f^{-1}(q_l)$.  Therefore we have a contradiction
        and $q=f(x)$ as required.

        \textbf{6} Fifth, suppose that $\{x_i\}$ lies in
        $\overline{\phi\man{M}}$ and that it uniquely converges to
        $x\in\bound[\phi]{M}$.  If either
        $\{x_i\}\cap\bound[\phi]{\man{M}}$ or
        $\{x_i\}\cap\phi\man{M}$ is finite we can use the
        arguments above to show that $\{f(x_i)\}$ uniquely
        converges to $f(x)$, so suppose that neither set is
        finite. In this case we know that both sequences $f(\{x_i\}\cap\bound[\phi]{\man{M}})$
        and
        $f(\{x_i\}\cap\phi\man{M})$ must uniquely converge to
        $f(x)$,  and therefore $\{f(x_i)\}$ must also uniquely
        converge to $f(x)$. Hence $f$ is continuous.

        Since $x\covers f(x)$, $\phi\psi^{-1}$ is continuous and $f^{-1}$
        is bijective the
        same argument can be applied to show that $f^{-1}$ is
        continuous and therefore $f$ is a homeomorphism.
      \item[$\mathbf{2\Rightarrow 3}$] Suppose that there exists a
      homeomorphism,
      $f:\overline{\phi\man{M}}\to\overline{\psi\man{M}}$ so that
      $f\phi=\psi$. Let $x\in\bound[\phi]{\man{M}}$ and $y=f(x)$.
      Since $f\phi=\psi$, it is clear that $y\in\bound[\psi]{\man
      M}$. Let $\seq s\in\Seqs{\phi}{x}$. By the continuity of $f,
      \seq s\in\Seqs{\psi}{y}$, thus $\Seqs{\phi}{x}\subset
      \Seqs{\psi}{y}$.

      Now, let $\seq t\in\Seqs{\psi}{y}$. By the continuity of
      $f^{-1}, \seq t\in\Seqs{\phi}{x}$. Hence
      $\Seqs{\psi}{y}\subset\Seqs{\phi}{x}$.

      It follows that $\Seqs{\phi}{x}=\Seqs{\psi}{y}$. By a
      similar argument, it can be shown that for all
      $p\in\bound[\psi]{\man M}$ there exists
      $q\in\bound[\phi]{\man{M}}$ so that
      $\Seqs{\psi}{p}=\Seqs{\phi}{q}$.
    \end{description}
  \end{proof}
  
    The technique employed in $\mathbf{1 \Rightarrow 2}$ can be very
    useful when working with the Abstract Boundary.

  \begin{Cor}
    Let $\phi:\man{M}\to\man{M}_\phi$ and
    $\psi:\man{M}\to\man{M}_\psi$ be envelopments, let
    $x\in\bound[\phi]{\man{M}}, y\in\bound[\psi]{\man{M}}$ be such
    that $x\equiv y$ and let $U\subset\man{M}$. Then
    $x\in\overline{\phi(U)}$ if and only if
    $y\in\overline{\psi(U)}$.
  \end{Cor}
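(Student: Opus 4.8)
The plan is to exploit the symmetry of the statement together with the sequential characterisation of the covering relation in Theorem \ref{Thm:Abstract Boundary Covering Relation - sequences}. Since $x\equiv y$ is symmetric (it gives both $x\covers y$ and $y\covers x$), and the roles of $\phi$ and $\psi$ in the claim are interchangeable, it suffices to prove a single implication, say that $x\in\overline{\phi(U)}$ implies $y\in\overline{\psi(U)}$; the converse then follows verbatim after swapping $\phi$ with $\psi$ and $x$ with $y$.

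First I would pass from closure to sequences. Because $\man{M}_\phi$ is a manifold it is first countable, so $x\in\overline{\phi(U)}$ yields a sequence $\{\phi(u_i)\}\subset\phi(U)$, with each $u_i\in U\subset\man{M}$, converging to $x$. In particular $\phi(\{u_i\})\to x$ in the paper's notation, i.e.\ $\{\phi(u_i)\}$ has $x$ as an accumulation point. One may note in passing that $\{u_i\}$ can have no limit point in $\man{M}$, since $\phi$ would carry such a point to $x\notin\phi(\man{M})$ by continuity and Hausdorffness, so in fact $\{u_i\}\in\Seqz$; but this observation is not needed, as Theorem \ref{Thm:Abstract Boundary Covering Relation - sequences} is stated for arbitrary sequences in $\man{M}$.

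Next I would invoke the covering relation. Since $x\equiv y$ we have $y\covers x$, so Theorem \ref{Thm:Abstract Boundary Covering Relation - sequences}, applied to the sequence $\{u_i\}$ whose $\phi$-image accumulates at $x$, tells us that $\{\psi(u_i)\}$ has $y$ as an accumulation point. As each $\psi(u_i)$ lies in $\psi(U)$, every neighbourhood of $y$ meets $\psi(U)$, whence $y\in\overline{\psi(U)}$, which is exactly what was to be shown.

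The argument is short and I do not anticipate a genuine obstacle; the only point requiring care is bookkeeping the orientation of $\covers$ when applying the sequential covering theorem, so that one feeds in a sequence whose $\phi$-image accumulates at the \emph{covered} point $x$ and reads off accumulation of the $\psi$-image at the \emph{covering} point $y$, using $y\covers x$. Everything else is routine once closure has been translated into sequential language via first countability.
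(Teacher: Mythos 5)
Your proof is correct and follows essentially the same route as the paper: extract a sequence in $U$ whose $\phi$-image converges to $x$, apply Theorem \ref{Thm:Abstract Boundary Covering Relation - sequences} via $y\covers x$ to get accumulation of the $\psi$-image at $y$, and conclude by symmetry. Your explicit care with the orientation of $\covers$ and the remark on first countability only make explicit what the paper leaves implicit.
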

  \begin{proof}
    Suppose that $x\in\overline{\phi(U)}$. Then there exists a
    sequence $\{p_i\}\subset U$ so that $\{\phi(p_i)\}\to x$.
    From theorem \ref{Thm:Abstract Boundary Covering Relation -
      sequences} we know that $\{\psi(p_i)\}$ must have $y$ as a
      limit point. Since $\{\psi(p_i)\}\subset\psi(U)$ then
      $y\in\overline{\psi(U)}$.  The same argument can be applied
      in the reverse direction.
  \end{proof}

  \subsection{Cauchy structures}\label{envelopments cauchy}
        Note that the proof of proposition \ref{DisEm:Prop.ConditionsForEquivalenceOfEmbeddings} is very similar to the proof that
        uniformly continuous functions $f:X\to Y$ into a complete space $Y$ extend uniquely to the completion of $X$. This similarity is
        not accidental, even though in our case the functions involved are not necessarily uniformly continuous. The similarity exists because there is a more
        general extension theorem (see \cite{Reed1971}) regarding Cauchy continuous functions which does apply in our case.

        The more general theorem is best expressed in the
        language of Cauchy spaces \cite{LowenColebunders1989}.
        Since the majority of our intended audience are unlikely to be familiar with Cauchy spaces and as explicit proofs are instructive, we have given full proofs of the main
        results, propositions \ref{DisEm:Prop.ConditionsForEquivalenceOfEmbeddings} and \ref{DisEm:Prop.DistanceEquivalenceStatmentsNotInDis}, without
        appealing to Cauchy spaces. For those readers who are familiar we include below a brief review of the needed material, the relevant extension
        theorem and the alternative proof of proposition \ref{DisEm:Prop.ConditionsForEquivalenceOfEmbeddings}. At the end of section \ref{sec.distances}
        we give an alternative proof of proposition \ref{DisEm:Prop.DistanceEquivalenceStatmentsNotInDis}.

        Cauchy structures on a set are usually defined in terms of filters.
        In any first countable topological space, however, filters are equivalent to sequences. We provide the following definition:
        \begin{Def}
                A Cauchy structure on a metric space $T$ is a subset $S$
                of the set $\Sigma(T)$ of all sequences in $T$ so that the set of filters corresponding to the sequences in $S$ generates
                a Cauchy structure
                in the sense of \cite{LowenColebunders1989}. That is, the set $\mathcal{F}$ of filters corresponding to the sequences in $S$ is 
                such that
                \begin{enumerate}
                        \item for each $x\in T$ the ultrafilter at $x$ is in $\mathcal{F}$,
                        \item if $F\in\mathcal{F}$ and $F\subset G$, where $G$ is a filter, then
                                                $G\in\mathcal{F}$,
                        \item if $F,G\in\mathcal{F}$ and every element of $F$ intersects every element of $G$ then $F\cap G\in\mathcal{F}$.
                \end{enumerate}
                We say that a Cauchy structure $S$ on a metric space $T$ is compatible with the topological structure if, for every $\seq s\in S$,
                we have that either $\seq s$ has no limit points or every subsequence $\seq q$ of $\seq s$ has a unique limit point with respect to the topology. 
                Whenever we impose a Cauchy structure on a metric space we
                will implicitly assume that the topology and Cauchy structure are compatible.
        \end{Def}
        We have two specific situations to consider. Let $d$ be a distance on a manifold $\man{M}$, then the set of all sequences that are
        Cauchy with respect $d$ is a Cauchy structure. Let $\phi:\man{M}\to\man{M}_\phi$ be an envelopment of a manifold $\man{M}$, then the set of
        all sequences $\seq s$ in $\man{M}$ so that every subsequence $\seq q\subset\seq s$ converges under $\phi$
        is a Cauchy structure.
        \begin{Def}
                A function $f:X\to Y$ between metric spaces is Cauchy continuous with respect to the Cauchy structures $C_X$ and $C_Y$
                on $X$ and $Y$ respectively if, for all $\seq s\in C_X$, $f(\seq s)\in C_Y$.
        \end{Def}
        \begin{Def}
                Let $Y$ be a metric space with Cauchy structure $S$. Then $Y$ is said to be complete with respect to $S$
                if and only if every sequence $\seq s\in S$ converges to a 
                point with respect to the topology on $Y$. Where the selection of a Cauchy structure $S$ is clear, we shall simply say that $Y$ is complete.
        \end{Def}
        \begin{Def}
                Let $X$ and $Y$ be metric spaces. Then $Y$ is a completion of $X$ if $Y$ is complete and there exists $f:X\to Y$, a Cauchy continuous
                map, so that $f(X)$ is dense in $Y$. 
        \end{Def}
        In particular, if $\phi:\man{M}\to\man{M}_\phi$ is an envelopment then $\man{M}$ carries the Cauchy structure 
        derived from $\phi$ discussed above. With respect
        to this structure, the topological space $\overline{\phi(\man{M})}$ is a completion of $\man{M}$. Likewise,
        the usual Cauchy completion of a manifold is a completion, in the sense above, with respect to the Cauchy structure induced by a
        distance and the inclusion embedding.
        
        We have the following result:
        \begin{Thm}\label{newTHerem}
                Let $Y$ be a metric space which is a completion of the metric space $X$, with respect to the
                Cauchy continuous map $k:X\to Y$ and the Cauchy structures $C_X$ and $C_Y$ on $X$ and $Y$ respectively. Let
                $Z$ be a metric space complete with respect to the Cauchy structure $C_Z$ on $Z$.
                Then a Cauchy continuous function $f:X\to Z$ has a unique continuous and Cauchy continuous extension $\hat{f}:Y\to Z$ so that
                $\hat{f}\circ k= f$.
        \end{Thm}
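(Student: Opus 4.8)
The plan is to mimic the classical theorem that a uniformly continuous map into a complete space extends uniquely to the completion, but with Cauchy continuity playing the role of uniform continuity. First I would define the extension pointwise. Given $y\in Y$, density of $k(X)$ in $Y$ provides a sequence $\{x_i\}\subset X$ with $k(x_i)\to y$. I would then set $\hat f(y)=\lim_i f(x_i)$, the limit taken in $Z$. For this to make sense I need $\{f(x_i)\}$ to converge: since $f$ is Cauchy continuous, $\{f(x_i)\}\in C_Z$ whenever $\{x_i\}\in C_X$, and completeness of $Z$ then forces $\{f(x_i)\}$ to converge, while compatibility of $C_Z$ with the topology guarantees the limit is unique.

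The main obstacle, and the step that does the real work, is showing that $\{x_i\}\in C_X$ in the first place. Density alone only gives that $\{k(x_i)\}$ converges to $y$, hence (convergent sequences are Cauchy by the first two axioms for a Cauchy structure) that $\{k(x_i)\}\in C_Y$. To pass back to $C_X$ I would use that the map $k$ of a completion reflects Cauchy sequences, i.e. that it is a Cauchy embedding, so that $\{x_i\}\in C_X$ if and only if $k(\{x_i\})\in C_Y$; this reflection property is precisely the analogue of uniform continuity in the classical statement, and it is what rules out pathological situations (for instance, including $\mathbb{Q}$ into $\mathbb{R}$ while declaring only the $\mathbb{Q}$-convergent sequences Cauchy, where no continuous extension into a totally disconnected target can exist). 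With this in hand, well-definedness follows by interleaving: if $k(x_i)\to y$ and $k(x'_i)\to y$, the interleaved sequence $w=\{x_1,x'_1,x_2,x'_2,\dots\}$ satisfies $k(w)\to y$, hence $w\in C_X$ and $f(w)\in C_Z$ converges; since $\{f(x_i)\}$ and $\{f(x'_i)\}$ are subsequences of the convergent sequence $f(w)$, compatibility of $C_Z$ forces them to share its limit, so $\hat f(y)$ is independent of the chosen sequence.

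The remaining verifications are routine. For $\hat f\circ k=f$ I would evaluate at $y=k(x)$ using the constant sequence $x_i=x$, which is Cauchy by the first axiom; the same well-definedness argument applied to the constant sequences at $x$ and $x'$ shows that $k(x)=k(x')$ implies $f(x)=f(x')$, so no injectivity of $k$ need be assumed separately. For continuity I would run a diagonal argument: given $y_n\to y$ in $Y$, choose for each $n$ a point $z_n\in X$ with $d_Y(k(z_n),y_n)<1/n$ and $d_Z(f(z_n),\hat f(y_n))<1/n$; then $k(z_n)\to y$, so $\{z_n\}\in C_X$ and $f(z_n)\to\hat f(y)$ by the definition of $\hat f$, whence $\hat f(y_n)\to\hat f(y)$ by the triangle inequality. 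Cauchy continuity of $\hat f$ is then immediate: any $\{y_n\}\in C_Y$ converges since $Y$ is complete, so continuity gives $\hat f(y_n)\to\hat f(y)$, and a convergent sequence in $Z$ is Cauchy, i.e. lies in $C_Z$. Finally, uniqueness follows because any two continuous extensions agree on the dense set $k(X)$ and $Z$ is a metric, hence Hausdorff, space.
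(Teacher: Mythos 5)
You should know at the outset that the paper does not actually prove this theorem: its ``proof'' is the single sentence ``Refer to \cite{Reed1971} or \cite{LowenColebunders1989}.'' So your proposal is necessarily a different route, and its overall architecture --- define $\hat f$ pointwise via density, prove well-definedness by interleaving, get continuity by a diagonal argument, and get uniqueness from density plus the Hausdorff property of $Z$ --- is the right one; it is essentially the classical extension argument that the cited references adapt to Cauchy spaces. However, two load-bearing steps in your proof are not available under the paper's stated definitions, and the justification you give for one of them is wrong.

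The incorrect justification is the parenthetical claim that a sequence converging in the metric topology of $Y$ lies in $C_Y$ ``by the first two axioms.'' The tail filter of a convergent sequence is in general neither finer nor coarser than the point ultrafilter at its limit (the singleton $\{y\}$ belongs to the ultrafilter but contains no tail unless the sequence is eventually constant), so axioms 1 and 2 yield nothing, and axiom 3 cannot be invoked because the tail filter is not yet known to be Cauchy. Indeed, the family of point ultrafilters alone --- equivalently, the eventually constant sequences --- satisfies all three axioms, is compatible in the paper's sense, and is complete, yet contains no non-trivially convergent sequence. The second problematic step is your assumption that $k$ reflects Cauchy sequences, i.e.\ $k(\seq s)\in C_Y \Rightarrow \seq s\in C_X$; you present this as part of being a completion, but the paper's definition of completion demands only that $k$ be Cauchy continuous with dense image. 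Neither gap can be patched from the paper's definitions, because the theorem as literally stated there is false: take $X=Y=Z=\mathbb{R}$ with the usual metric, give each the eventually-constant Cauchy structure, and let $k=\mathrm{id}$; then $Y$ is a completion of $X$ and every function $f:X\to Z$ is Cauchy continuous, but a discontinuous $f$ admits no continuous extension. (Your own $\mathbb{Q}\hookrightarrow\mathbb{R}$ example makes the same point: it satisfies the paper's hypotheses while refuting the conclusion, rather than merely motivating an extra assumption.) What is true --- and what \cite{Reed1971} and \cite{LowenColebunders1989} prove --- is the statement in the intrinsic Cauchy-space setting, where convergence is by definition the convergence induced by the Cauchy structure (so convergent filters are Cauchy by the refinement axiom) and a completion is required to be a Cauchy embedding (so reflection holds). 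Under those strengthened hypotheses, which every Cauchy structure actually used in this paper does satisfy, your argument is correct and is essentially the standard proof.
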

        \begin{proof}
                Refer to \cite{Reed1971} or \cite{LowenColebunders1989}.
        \end{proof}

        We are now able to give our alternate proof of proposition \ref{DisEm:Prop.ConditionsForEquivalenceOfEmbeddings}.
        
        \begin{proof}[Alternative proof of proposition \ref{DisEm:Prop.ConditionsForEquivalenceOfEmbeddings}]
    \begin{description}
      \item[$\mathbf{3\Leftrightarrow 1}$] Apply lemma \ref{DisEm:Lem.SequenceDefinitionOfEmbeddingContainment}
        twice.

      \item[$\mathbf{1 \Rightarrow 2}$]
        Define
        $f:\man{M}\to\overline{\psi(\man{M})}$ by
        $
          f(x)=\psi(x)
        $ for all $x\in \man{M}$. Equip $\man{M}$ and $\overline{\psi(\man{M})}$ with the Cauchy structures $C_\phi$ and $C_{\psi}$
        induced by inclusion into $\man{M}_\phi$ and $\man{M}_\psi$, respectively.
        That is, a sequence $\seq s$ is Cauchy in $\man{M}$ if and only if $\phi(\seq s)$ converges in $\overline{\phi(\man{M})}$ and
        a sequence $\seq s$ is Cauchy in $\overline{\psi(\man{M})}$ if and only if $\seq s$ converges in $\overline{\psi(\man{M})}$.
        
        We will now show that $f$ is Cauchy continuous.
        Let $\seq s\in C_\phi$. If $\phi(\seq s)$ converges in $\phi(\man{M})$ then
        $\psi(\seq s)$ will also converge in $\psi(\man{M})$, by the continuity of $\psi\circ\phi^{-1}$.
        So suppose then that $\phi(\seq s)\to x\in\partial(\phi(\man{M}))$ uniquely. By lemma      
        \ref{DisEm:Lem.SequenceDefinitionOfEmbeddingContainment} there exists $y\in\partial(\psi(\man{M}))$ so that
        $\Seqs{\phi}{x}=\Seqs{\psi}{y}$. Thus, as $\seq s\in\Seqs{\phi}{x}$, we know that
        $\seq s\in \Seqs{\psi}{y}$. Hence $\psi(\seq s)$ has $y$ as a limit point.
        To show that $f(\seq s)$ is Cauchy, however, we must show that the limit point of
        $\psi(\seq s)$ is unique.
        
        Let us first suppose that $\psi(\seq s)$ has a subsequence $\psi(\seq p)$ with no limit points. Since $\seq p\subset \seq s$
        we know that $\seq p\in \Seqs{\phi}{x}$ and therefore that $\seq p\in \Seqs{\psi}{y}$. This contradicts our assumption that
        $\psi(\seq p)$ has no limit points. Likewise suppose that $\psi(\seq s)$ also has $p\in\overline{\psi(\man{M})}$, $p\neq y$, as a limit point.
        Then there must exist some subsequence $\psi(\seq p)$ of $\psi(\seq s)$ so that $\psi(\seq p)\to p$ uniquely. Since $\seq p\subset\seq s$,
        we know that
        $\seq p\in \Seqs{\phi}{x}$ and hence $\seq p\in\Seqs{\psi}{y}$. Once again this is a contradiction. It follows that
        $\psi(\seq s)$ converges to $y$ in $\partial(\psi(\man{M}))$ and so $\psi(\seq s)\in C_\psi$.
        Therefore the function $f$ is Cauchy continuous.
        
        Thus, by theorem \ref{newTHerem}, $f$ has a unique extension $\hat{f}:\overline{\phi(\man{M})}\to\overline{\psi(\man{M})}$
        so that $\hat{f}\phi=\psi$. By the
        uniqueness of $\hat{f}$ and as the argument is symmetric in $\phi$ and $\psi$, it must be the case that
        $\hat{f}$ is a homeomorphism.
        
      \item[$\mathbf{2\Rightarrow 3}$] Suppose that there exists a
      homeomorphism,
      $f:\overline{\phi(\man{M})}\to\overline{\psi(\man{M})}$ so that
      $f\phi=\psi$. Let $x\in\bound[\phi]{\man{M}}$ and $y=f(x)$.
      Since $f\phi=\psi$, it is clear that $y\in\bound[\psi]{\man
      M}$. Let $\seq s\in\Seqs{\phi}{x}$. By the continuity of $f,
      \seq s\in\Seqs{\psi}{y}$, thus $\Seqs{\phi}{x}\subset
      \Seqs{\psi}{y}$.
      Now, let $\seq t\in\Seqs{\psi}{y}$. By the continuity of
      $f^{-1}, \seq t\in\Seqs{\phi}{x}$. Hence
      $\Seqs{\psi}{y}\subset\Seqs{\phi}{x}$.

      It follows that $\Seqs{\phi}{x}=\Seqs{\psi}{y}$. By a
      similar argument, it can be shown that for all
      $p\in\bound[\psi]{\man M}$ there exists
      $q\in\bound[\phi]{\man{M}}$ so that
      $\Seqs{\psi}{p}=\Seqs{\phi}{q}$.
    \end{description}
  \end{proof}

\section{An equivalence on a class of distances}\label{sec.distances}

    We now have a way to describe when two envelopments produce the same
    Abstract Boundary points in terms of sequences in $\man{M}$.  We present here
    an equivalence relation on a set of distances on $\man{M}$.  This relation mirrors 
    the one in the previous section and will allow us, in the next section, to define a one-to-one correspondence
    between the two sets of equivalence classes.
  
    First we need to define the structures with which we will be working.

  \begin{Def}
    Let $d:\man{M}\times\man{M}\to\mathbb{R}$ be a distance. Let
    $\Cau=\{\seq s\in \Seqz:\seq s $ is Cauchy with respect to
    $d\}$\footnote{The set $C(d)$ captures the portion of the Cauchy structure given by $d$ which is relevant for the $a$-boundary.}
  \end{Def}

  \begin{Def}\label{DisEm:Def.CauchyCompletion}
    Let $d:\man{M}\times\man{M}\to\mathbb{R}$ be a distance. Let
    $\man{M}^d$ denote the Cauchy completion of $\man{M}$ with respect
    to $d$.
    We think of $\man{M}^d$ as the set of all Cauchy sequences in
    $\man{M}$ under an equivalence relation.  Two Cauchy sequences
    $\seq u$ and $\seq v$ in $\man{M}$ are equivalent if
    $\lim_{i,j\to\infty}d(u_i,v_j)=0$\footnote{As $\seq u$ and $\seq v$ are Cauchy sequences the same limit will
    be achieved regardless of the choice of ordering of each sequence.}.   Let $\seq u$ be a Cauchy sequence in $\man{M}$, then we shall
    denote the equivalence class of $\seq u$ by $[\seq u]_d$. Where unambiguous
    we will drop the subscript and simply write $[\seq u]$.
    We topologise $\man M^d$ by extending the distance $d$ to $\man M^d$.
    Let $d^*:\man M^d\times \man M^d\to\mathbb{R}$ be such that
    $d^*([\seq u],[\seq v])=\lim_{i,j\to\infty}d(u_i,v_j)$. Then there exists an isometry
    $\imath_d:\man{M}\to {\man M}^d$ so that $\imath_d(x)=[\seq w_x]$ where $\seq w_x$
    is the constant sequence at $x$.
    Note that, in general, $\man M^d$ is not necessarily a manifold with boundary.
  \end{Def}

    For utility, we give the following trivial result.

  \begin{Lem}\label{DisEm:Lem.SequencesInCauchyCompletetion}
    Let $\seq u=\{u_i\}$  be a Cauchy sequence in \man{M} with respect to $d$,
    then $\{\imath_d(u_i)\}$ $\to [\seq v]$
    if and only if $\seq u\in[\seq v]$.
  \end{Lem}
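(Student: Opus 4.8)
The plan is to prove something slightly stronger and cleaner: that for \emph{any} Cauchy sequence $\seq u=\{u_i\}$ the sequence $\{\imath_d(u_i)\}$ converges uniquely in $\man{M}^d$ to the point $[\seq u]_d$. Granting this, the lemma follows immediately, because a uniquely convergent sequence has exactly one accumulation point; hence $[\seq v]$ is an accumulation point of $\{\imath_d(u_i)\}$ (which is what $\{\imath_d(u_i)\}\to[\seq v]$ means in the notation of Section \ref{Sec.Prelim}) if and only if $[\seq v]=[\seq u]_d$, which in turn holds if and only if $\seq u\in[\seq v]$.

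To establish the stronger claim I would first record the elementary identity, valid for any Cauchy sequence $\seq w=\{w_j\}$,
\[
  d^*\bigl(\imath_d(u_i),[\seq w]_d\bigr)=\lim_{j\to\infty}d(u_i,w_j),
\]
which comes directly from the definition of $\imath_d$ (the constant sequence at $u_i$) and of $d^*$; the inner limit exists because $\{d(u_i,w_j)\}_j$ is Cauchy in $\mathbb{R}$ by the triangle inequality. Applying this with $\seq w=\seq u$ gives $d^*(\imath_d(u_i),[\seq u]_d)=\lim_{j}d(u_i,u_j)$. Since $\seq u$ is Cauchy, for every $\varepsilon>0$ there is an $N$ with $d(u_i,u_j)<\varepsilon$ for all $i,j\ge N$; fixing $i\ge N$ and letting $j\to\infty$ shows $\lim_j d(u_i,u_j)\le\varepsilon$, so $d^*(\imath_d(u_i),[\seq u]_d)\to 0$ as $i\to\infty$. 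Thus $\{\imath_d(u_i)\}$ converges to $[\seq u]_d$.

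It remains only to invoke uniqueness of limits. Because $d^*$ is a genuine metric on $\man{M}^d$ (the equivalence relation of Definition \ref{DisEm:Def.CauchyCompletion} is precisely what forces $d^*([\seq u]_d,[\seq v]_d)=0$ to imply $[\seq u]_d=[\seq v]_d$), the space $\man{M}^d$ is Hausdorff, the limit found above is unique, and a convergent sequence in a Hausdorff space has its limit as its sole accumulation point. I expect the only real subtlety to be the nonstandard meaning of ``$\to$'': since it asserts merely the \emph{existence} of an accumulation point rather than convergence, one must genuinely rule out $[\seq v]$ being a spurious accumulation point distinct from $[\seq u]_d$, and it is exactly the unique convergence of $\{\imath_d(u_i)\}$ that does this. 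With that in hand, $\{\imath_d(u_i)\}\to[\seq v]$ forces $[\seq v]=[\seq u]_d$, equivalently $\seq u\in[\seq v]$; and conversely $\seq u\in[\seq v]$ gives $[\seq v]=[\seq u]_d$, so that $\{\imath_d(u_i)\}\to[\seq v]$.
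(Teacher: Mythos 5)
Your proposal is correct, and it is organised genuinely differently from the paper's own proof. The paper proves the two implications separately by unwinding the definition of $d^*$ against $[\seq v]$ directly: in each direction it translates $d^{*}(\imath_d(u_i),[\seq v])<\epsilon$ into smallness of the defining limit of $d(u_i,v_j)$. You instead factor the proof through the stronger statement that $\{\imath_d(u_i)\}$ converges uniquely to $[\seq u]_d$ --- a statement involving only $\seq u$ --- and then obtain the biconditional in one stroke from the fact that $\man{M}^d$ is a metric space, so a convergent sequence has its limit as its sole accumulation point. This buys two genuine improvements. First, it handles the paper's non-standard meaning of ``$\to$'' (mere existence of an accumulation point, not convergence) honestly: the paper's forward direction tacitly upgrades the accumulation-point hypothesis to convergence by writing ``for all $i>i^*$'', a step which is only legitimate because $\{\imath_d(u_i)\}$ is $d^*$-Cauchy and a Cauchy sequence with an accumulation point converges to it; your unique-convergence argument makes this rigorous rather than implicit. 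Second, your identity $d^*\bigl(\imath_d(u_i),[\seq w]_d\bigr)=\lim_{j\to\infty}d(u_i,w_j)$, with $i$ held fixed, repairs the paper's ambiguous notation $\lim_{i,j\to\infty}d(u_i,v_j)<\epsilon$, in which $i$ appears simultaneously as a fixed index and a limit variable. The only price is that you must note $d^*$ is a genuine metric (positive definiteness being exactly what the quotient relation enforces), which the paper also uses but never states.
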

  \begin{proof}
    Suppose that $\{\imath_d(u_i)\}_i\to [\seq v]$.
    Then for all $\epsilon>0$ there exists
    $i^*\in\mathbb{N}$ so that for all $i>i^*$,
    $d^{*}(\imath_d(u_i),[\seq v])<\epsilon$. The condition $d^{*}(\imath_d(u_i),[\seq
    v])<\epsilon$ is equivalent to
    $\lim_{i,j\to\infty}d(u_i,v_j)<\epsilon$, however, and so
    $\lim_{i,j\to\infty}d(u_i,v_j)=0$. Thus $\seq u\in[\seq v]$.

    Suppose that $\seq u\in[\seq v]$. Then
    $\lim_{i,j\to\infty}d(u_i,v_j)=0$, so that for all
    $\epsilon>0$ there exists $i^*\in\mathbb{N}$ so that if
    $i>i^*$, then $\lim_{i,j\to\infty}d(u_i,v_j)<\epsilon$. As before this
    implies that $d^*(\imath_d(u_i),[\seq v])<\epsilon$, from which we can see
    that $\{\imath_d(u_i)\}_i\to[\seq v]$.
  \end{proof}

    Now we define an equivalence relation on the set of all
    distances on \man{M}.

  \begin{Def}
    Let $d$ and $d'$ be distances on \man{M}. Then $d$ and $d'$
    are equivalent, $d\simeq d'$, if and only if
    $\Cau[d]=\Cau[d']$\footnote{Note that this is slightly different from Cauchy equivalence since we are only concerned with Cauchy sequences that have no limit points in $\man{M}$.}.
  \end{Def}

    At some point we must link distances with envelopments, as that is the aim of 
    this paper. Unfortunately, it is reasonably easy to give examples of distances
    on a manifold that have no relation to any envelopments%
    \footnote{For example, let $\man{M}=\mathbb{R}^+$ with the embedding given by
          $\phi(x)=(x,\sin(\frac{1}{x}))$ into $\mathbb{R}^2$. Then the usual euclidean distance on $\mathbb{R}^2$ induces the distance 
          $d(x,y)=\sqrt{(x-y)^2+(\sin\frac{1}{x}-\sin\frac{1}{y})^2}$ on \man{M}. We note that the sequences $\{x_i=\frac{1}{2i\pi}\}$, 
          $\{y_i=\frac{2}{4i\pi+\pi}\}$ and 
          $\{z_i=\frac{2}{4i\pi+3\pi}\}$
          are all Cauchy with respect to $d$. Thus for $d$ to be induced by an envelopment $\phi$ it must be the
          case that the sequences $\{\phi(x_i)\}$, $\{\phi(y_i)\}$ and $\{\phi(z_i)\}$ each have a unique limit point.
          Moreover any envelopment of $\man{M}$ must be into either $\mathbb{R}$ or $S^1$. It will, therefore, have either zero, one or two boundary points.
          This implies that at least one of the pairs
          of sequences 
          $(\{\phi(x_i)\},\{\phi(y_i)\})$, $(\{\phi(x_i)\},\{\phi(z_i)\})$ and $(\{\phi(y_i)\},\{\phi(z_i)\})$ must share the same limit point. We 
          can calculate, however,
          that $\lim_{i\to\infty}d(x_i,y_i)=1$, $\lim_{i\to\infty}d(x_i,z_i)=1$ and $\lim_{i\to\infty}d(y_i,z_i)=2$. 
          This implies that none of the sequences $\{\phi(x_i)\}, \{\phi(y_i)\}, \{\phi(z_i)\}$ share a limit point. As this is a contradiction,
          $d$ cannot be induced by any envelopment $\phi$. Hence we must restrict the set of distances in which we are interested.}.

  \begin{Def}\label{DisEmb:Def.EnvelopebleDistance}
    Let $d:\man{M}\times\man{M}\to\mathbb{R}$ be a distance on
    $\man{M}$ and define $E(d)$ to be the set of all non-trivial envelopments
    $\phi:\man{M}\to\man{M}_\phi$ so that there exists a complete distance
    $d':\man{M}_\phi\times\man{M}_\phi\to\mathbb{R}$ with the property
    that $d'|_{\phi(\man{M})\times\phi(\man{M})}\simeq d$.
  \end{Def}

  \begin{Def}
    Let $\Dis=\{d: E(d)\neq\EmptySet\}$.
  \end{Def}

    It is interesting that $\Dis=\EmptySet$ is equivalent to the
    compactness of $\man{M}$.

  \begin{Prop}
    Let $\man{M}$ be a manifold.  The set $\Dis$ is empty if and
    only if $\man{M}$ is compact.
  \end{Prop}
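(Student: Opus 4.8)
The plan is to prove the two implications separately, in each case by contraposition, and to reduce both to a purely topological statement about envelopments. The crucial observation is that, for the purpose of producing an element of \Dis, the distance is almost free: every (second countable, Hausdorff) smooth manifold carries a \emph{complete} distance, e.g. the distance of a complete Riemannian metric. Hence, given any non-trivial envelopment $\phi:\man{M}\to\man{M}_\phi$ (one with $\partial(\phi(\man{M}))\neq\EmptySet$), I would fix a complete distance $d'$ on $\man{M}_\phi$ and set $d:=\phi^{*}\bigl(d'|_{\phi(\man{M})\times\phi(\man{M})}\bigr)$, the pullback along the open embedding $\phi$. Then $d$ is a distance inducing the manifold topology on \man{M}, and it is by construction isometric to $d'|_{\phi(\man{M})\times\phi(\man{M})}$, so that $\Cau[d]=\Cau[d'|_{\phi(\man{M})}]$ under the identification given by $\phi$; thus $d'|_{\phi(\man{M})\times\phi(\man{M})}\simeq d$ and $\phi\in E(d)$. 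Consequently $\Dis\neq\EmptySet$ as soon as \man{M} admits a non-trivial envelopment, and conversely if \man{M} admits \emph{no} non-trivial envelopment then every $E(d)$ is empty and $\Dis=\EmptySet$. So the whole proposition is equivalent to the statement that \man{M} admits a non-trivial envelopment if and only if \man{M} is non-compact.

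For the direction ``compact $\Rightarrow$ no non-trivial envelopment'' (hence $\Dis=\EmptySet$) I would argue directly. Let $\phi:\man{M}\to\man{M}_\phi$ be any envelopment; by definition \ref{Def:Envelopment} the image $\phi(\man{M})$ is open in $\man{M}_\phi$. If \man{M} is compact then $\phi(\man{M})$ is also compact, hence closed (as $\man{M}_\phi$ is Hausdorff), so $\phi(\man{M})$ is clopen and $\partial(\phi(\man{M}))=\EmptySet$. Therefore every envelopment of a compact \man{M} is trivial, no $E(d)$ can be non-empty, and $\Dis=\EmptySet$. Here I use that \man{M} is connected, the standing convention for space-times; without it the claim can fail, e.g. for a countable disjoint union of spheres, whose components are all compact and whose images under any envelopment are always clopen.

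The substantive direction is ``non-compact $\Rightarrow$ a non-trivial envelopment exists,'' and this is where the main difficulty lies. The plan is to realise \man{M} as the interior of a manifold-with-boundary. Concretely I would take a proper smooth exhaustion $f:\man{M}\to[0,\infty)$, choose a regular value $c$ so that $\Sigma:=f^{-1}(c)$ is a non-empty compact hypersurface bounding a collared non-compact end, reparametrise the collar of that end to finite length so that the points running to infinity limit onto a boundary copy of $\Sigma$, and thereby exhibit a manifold-with-boundary $\widehat{\man{M}}$ with $\mathrm{int}(\widehat{\man{M}})\cong\man{M}$ and $\partial\widehat{\man{M}}\neq\EmptySet$. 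Doubling, $\man{M}_\phi:=\widehat{\man{M}}\cup_{\partial}\widehat{\man{M}}$ is then a boundaryless manifold of the same dimension containing $\phi(\man{M})\cong\mathrm{int}(\widehat{\man{M}})$ as an open subset with $\partial(\phi(\man{M}))\supseteq\Sigma\neq\EmptySet$; the model case $\man{M}=\mathbb{R}^n\cong\mathrm{int}(\mathbb{H}^n)$, doubled to $\mathbb{R}^n$, shows the mechanism. The hard part is precisely controlling the end: a general end need not be a product, so the honest version of this step requires either choosing the exhaustion and hypersurface so that a genuine collar of the end is available, or replacing the ``double a manifold-with-boundary'' argument by a direct construction of $\man{M}_\phi$ that compresses only a single tame, collared end to finite distance while sending any remaining (possibly wild) ends off to infinity. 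I would isolate this as the one genuinely geometric lemma of the proof; the two implications above then follow formally from it together with the free choice of a complete distance on the enveloping manifold.
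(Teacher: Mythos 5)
Your reduction is sound and is in fact the paper's own skeleton: pull a complete distance on $\man{M}_\phi$ back through any non-trivial envelopment $\phi$ to obtain $d$ with $\phi\in E(d)$, so that the proposition becomes ``$\man{M}$ admits a non-trivial envelopment if and only if $\man{M}$ is non-compact''; and your clopen-image argument for the compact direction is correct (the paper merely asserts that direction). The genuine gap is that the whole mathematical content of the hard direction --- every connected non-compact manifold admits a non-trivial envelopment --- is exactly the step you leave unproved and label ``the one genuinely geometric lemma''. The paper does not leave this open: it takes a sequence with no limit points in $\man{M}$ (which exists by non-compactness, since manifolds are metrizable) and invokes the Endpoint Theorem of \cite{Ashley2002a}, which produces an envelopment $\phi:\man{M}\to\man{M}_\phi$ in which the image of that sequence converges to a point of $\partial(\phi(\man{M}))$. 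Without that theorem, or an equivalent construction, your proposal does not prove the proposition.

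Moreover, the route you sketch for the missing lemma is not merely incomplete; it is the wrong mechanism. Realizing $\man{M}$ as $\mathrm{int}(\widehat{\man{M}})$ with a compact boundary component $\Sigma=f^{-1}(c)$ forces $\man{M}$ to have an end with neighbourhoods of product form $\Sigma\times(0,1)$; a one-ended surface of infinite genus (the ``Loch Ness monster'') has a unique end, every neighbourhood of which has infinite genus, so no choice of exhaustion, regular value, or collar can make that work. The correct construction is local along a ray rather than global over an end: take a properly embedded ray $\gamma:[0,\infty)\to\man{M}$ (this exists in any connected non-compact manifold, and is the only place connectedness is needed), take a tubular neighbourhood $T\cong[0,\infty)\times B^{n-1}$ of it (trivial because the ray is contractible), identify $[0,\infty)\cong[0,1)$, and glue the cylinder $(0,2)\times B^{n-1}$ onto $\man{M}$ along the interior of $T$; properness of $T$ makes the result a Hausdorff manifold $\man{M}_\phi$ of the same dimension containing $\man{M}$ as an open subset, in which $\phi(\gamma(t))$ converges as $t\to\infty$ to a new point of $\partial(\phi(\man{M}))$. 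Only the tube is compressed; the rest of the end, however wild, stays at infinity. This is in essence what the Endpoint Theorem does. Finally, your connectedness caveat is attached to the wrong implication: the clopen argument for compact $\man{M}$ uses no connectedness at all, whereas your disjoint union of spheres is \emph{non}-compact, and (each image sphere being compact, open and connected, hence a clopen component of $\man{M}_\phi$) it is a counterexample to ``non-compact $\Rightarrow\Dis\neq\EmptySet$'' --- that is, it is precisely your unproved lemma that fails without connectedness.
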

  \begin{proof}
    $\Leftarrow$
    Since $\man{M}$ is compact there are no non-trivial
    envelopments of \man{M}.  Hence for all distances $d$ on
    \man{M}, $E(d)=\EmptySet$.  Therefore for all $d$, $d\not\in\Dis$
    and $\Dis$ must be empty.

    $\Rightarrow$
    Suppose that \man{M} is not compact.  Then there exists a
    sequence $\seq s$ in $\man{M}$ that has no limit points in \man{M}.  By the Endpoint 
    Theorem (see \cite{Ashley2002a}), we
    may use this sequence to construct an envelopment,
    $\phi:\man{M}\to\man{M}_\phi$ so that $\phi(\seq s)$ converges
    to some point on $\bound[\phi]{\man{M}}$.

    Choose a distance $d_\phi:\man{M}_\phi\times\man{M}_\phi\to\mathbb{R}$ so
    that $d_\phi$ is complete.  Define the distance
    $d:\man{M}\times\man{M}\to\mathbb{R}$ by
    $d(x,y)=d_\phi(\phi(x),\phi(y))$. Then $d\simeq
    d_\phi|_{\phi(\man{M})\times\phi(\man{M})}$ so that $\phi\in E(d)$
    and hence $d\in D(\man{M})$.

    Taking the contrapositive of
    this result we see that $D(\man{M})=\EmptySet$ implies that
    $\man{M}$ is compact.
  \end{proof}

    Just as for envelopments, there are several different ways to
    express the relation $\simeq$; again we give only those that
    are useful.

  \begin{Prop}\label{DisEm:Prop.DistanceEquivalenceStatmentsNotInDis}
    Let $d$ and $d'$ be distances on $\man{M}$, then $d\simeq d'$
    if and only if there exists a homeomorphism $f:\man{M}^d\to \man{M}^{d'}$ so that
        $f\imath_d=\imath_{d'}$.
  \end{Prop}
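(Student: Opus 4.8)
The plan is to prove both implications directly, mirroring the first proof of proposition \ref{DisEm:Prop.ConditionsForEquivalenceOfEmbeddings}, with the r\^ole of the homeomorphisms $\psi\phi^{-1}$ now played by the identity on \man{M}. Throughout I adopt the convention that a distance induces the manifold topology on \man{M}, so that $\Seqz$, being defined topologically, is unaffected by the choice between $d$ and $d'$. The heart of the argument is the following observation, which I would establish first: if $d\simeq d'$, i.e.\ $\Cau[d]=\Cau[d']$, then a sequence in \man{M} is $d$-Cauchy if and only if it is $d'$-Cauchy, and moreover for $d$-Cauchy sequences $\seq a,\seq b$ one has $\lim_{i,j}d(a_i,b_j)=0$ exactly when $\lim_{i,j}d'(a_i,b_j)=0$. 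For the first claim, a $d$-Cauchy sequence either has a limit point in \man{M}, in which case it converges topologically and is therefore $d'$-Cauchy, or it has none, in which case it lies in $\Seqz$ and hence in $\Cau[d]=\Cau[d']$; the converse is symmetric. For the second claim I would merge $\seq a$ and $\seq b$ into a single sequence $\seq w$; since $\lim_{i,j}d(a_i,b_j)=0$ this $\seq w$ is $d$-Cauchy, hence $d'$-Cauchy by the first claim, and reading off the cross terms gives $\lim_{i,j}d'(a_i,b_j)=0$.

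For the forward implication I would then define $f:\man{M}^d\to\man{M}^{d'}$ by $f([\seq w]_d)=[\seq w]_{d'}$ for every $d$-Cauchy sequence $\seq w$; this lands in $\man{M}^{d'}$ because every $d$-Cauchy sequence is $d'$-Cauchy. The observation above makes $f$ well defined (equal $d$-classes have equal $d'$-classes) and, by symmetry, gives a two-sided inverse $[\seq w]_{d'}\mapsto[\seq w]_d$, so $f$ is a bijection; applying it to constant sequences yields $f\imath_d=\imath_{d'}$ directly from definition \ref{DisEm:Def.CauchyCompletion}. The remaining and main obstacle is continuity. I would argue sequentially: given $P_n\to P$ in $\man{M}^d$, pick from a representative of $P_n$ a single point $y_n\in\man{M}$ chosen far enough along that both $d^{*}(\imath_d(y_n),P_n)<1/n$ and ${d'}^{*}(\imath_{d'}(y_n),f(P_n))<1/n$, which is possible precisely because the representative is simultaneously $d$- and $d'$-Cauchy. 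Then $\imath_d(y_n)\to P$, so by lemma \ref{DisEm:Lem.SequencesInCauchyCompletetion} the sequence $\{y_n\}$ represents $P$, whence $f(P)=[\{y_n\}]_{d'}$ and, again by lemma \ref{DisEm:Lem.SequencesInCauchyCompletetion}, $\imath_{d'}(y_n)\to f(P)$; the triangle inequality together with the $1/n$ bound then forces $f(P_n)\to f(P)$. Continuity of $f^{-1}$ follows by exchanging $d$ and $d'$, so $f$ is a homeomorphism.

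The reverse implication is short. Given a homeomorphism $f$ with $f\imath_d=\imath_{d'}$, let $\seq s\in\Cau[d]$. Then $\seq s$ is $d$-Cauchy, so $\{\imath_d(s_i)\}$ converges in $\man{M}^d$ by lemma \ref{DisEm:Lem.SequencesInCauchyCompletetion}; applying the continuous map $f$ and using $f\imath_d=\imath_{d'}$ shows that $\{\imath_{d'}(s_i)\}$ converges in $\man{M}^{d'}$, and since $\imath_{d'}$ is an isometry this makes $\seq s$ a $d'$-Cauchy sequence. As $\seq s$ has no limit point in \man{M} by hypothesis, $\seq s\in\Cau[d']$; using $f^{-1}$ gives the reverse containment, so $\Cau[d]=\Cau[d']$ and $d\simeq d'$. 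I expect the genuine difficulty to be concentrated entirely in the continuity step of the forward direction, and I note that this is exactly where the alternative argument at the end of this section instead invokes the Cauchy-continuous extension theorem \ref{newTHerem}.
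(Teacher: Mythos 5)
Your proof is correct, and its skeleton --- defining $f([\seq s]_d)=[\seq s]_{d'}$, getting well-definedness and injectivity from the interleaved sequence, reading $f\imath_d=\imath_{d'}$ off constant sequences, and the short reverse implication via lemma \ref{DisEm:Lem.SequencesInCauchyCompletetion} and the isometry $\imath_{d'}$ --- coincides with the paper's. Where you genuinely depart from the paper is the continuity step, which is exactly where the paper spends nearly all of its effort: there, sequential continuity is proved by a case analysis modelled on the proof of proposition \ref{DisEm:Prop.ConditionsForEquivalenceOfEmbeddings} (sequences in $\imath_d\man{M}$; sequences in $\partial(\imath_d(\man{M}))$, first existence of the limit point $f(x)$, then its uniqueness, both by contradiction using unions of approximating sequences and families of pairwise disjoint neighbourhoods $V_i$; finally mixed sequences). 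You instead exploit the metric $d'^*$ directly: pick $y_n\in\man{M}$ far enough along a representative of $P_n$ that $d^{*}(\imath_d(y_n),P_n)<1/n$ and $d'^{*}(\imath_{d'}(y_n),f(P_n))<1/n$ --- possible precisely because representatives are simultaneously $d$- and $d'$-Cauchy --- then conclude with lemma \ref{DisEm:Lem.SequencesInCauchyCompletetion} and the triangle inequality. This is shorter and eliminates every case distinction; it is available because, unlike the envelopment setting of proposition \ref{DisEm:Prop.ConditionsForEquivalenceOfEmbeddings} whose purely topological template the paper re-uses, the completion $\man{M}^{d'}$ carries a metric, and your diagonal approximation is essentially the classical completion-extension construction, i.e., the concrete form of the mechanism that theorem \ref{newTHerem} packages abstractly and that the paper's alternative proof invokes. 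A further point in your favour: you state explicitly the convention that distances induce the manifold topology, which both your preliminary observation and the paper's own assertion that $f$ is continuous on $\imath_d\man{M}$ silently require.
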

  \begin{proof}
        The proof of this result follows the same format as the proof of
        step $\mathbf{1 \Rightarrow 2}$ in proposition \ref{DisEm:Prop.ConditionsForEquivalenceOfEmbeddings}.
  
    $\Rightarrow$

    Since $d\simeq d'$, we know that $\Cau[d]=\Cau[d']$, and hence we can
    define $f:\man{M}^d\to\man{M}^{d'}$ by $f([\seq s]_d)=[\seq
    s]_{d'}$\footnote{This is the unique $f$ given by the lifting of
    $\imath_d$ to $\man{M}^{d'}$.} where $\seq s$ is Cauchy with respect to $d$. Note that
    $\seq s$ may have a limit point in $\man{M}$, and hence would not be in $\Seqz$.

    Because $\Cau[d]=\Cau[d']$ we can easily see that $f$ is surjective.  If 
    $f([\seq u])=f([\seq v])$ then the sequence $\seq w$ given by;
    \begin{equation*}
                        w_i =
                        \begin{cases}
                                u_{\frac{i}{2}}&\text{iff}\ i=2n\\ 
                                v_{\frac{i+1}{2}}&\text{iff}\ i=2n+1
                        \end{cases}     
                \end{equation*}
    is such that $\seq w\in\Cau[d']$.  Therefore $\seq w\in\Cau[d]$ and $[\seq u]_d=[\seq v]_d$.
    We can conclude that $f$ is
    bijective and by definition we know that
    $f\imath_d=\imath_{d'}$. Thus $f$ is continuous on
    $\imath_d\man{M}$.

    Since $\man{M}^d$ is first countable to show that $f$ is continuous everywhere
                it is enough to show that $f$ is sequentially continuous.
                Since $f$ is
    continuous on $\imath_d\man{M}$, we need only consider
    sequences in $\man{M}^d$ that have limit points in
    $\bound[\imath_d]{\man{M}}$.
    
    The proof that $f$ is sequentially continuous is long. We have
    divided the proof into four sections. The first section
    shows that for any sequence $\seq s\subset\imath_d\man{M}$ so that $\seq s\to x\in\partial\imath_d\man{M}$ uniquely
    we have that $f(\seq s)\to f(x)$ uniquely. The second section show that if
    $\seq s\subset\partial\imath_d\man{M}$ converges uniquely to $x\in\partial\imath_d\man{M}$
    then $f(\seq s)$ has $f(x)$ as a, not necessarily unique, limit point. The third
    section demonstrates that for $\seq s$ as in the second section the limit point $f(x)$ of
    $f(\seq s)$ is unique. The fourth section considers sequences in $\man{M}^d$ without
    restriction to either $\imath_d\man{M}$ or $\partial\imath_d\man{M}$. We shall sometimes
    repeat the arguments of an earlier section in later portions of the proof. To aid reference we have
    numbered the paragraphs.

    \textbf{1} First, let $\{x_i\}$
    be a sequence in $\imath_d\man{M}$ that converges uniquely to
    $x\in\bound[\imath_d]{M}$. Since
    $x_i\in\imath_d\man{M}$ there must exist $y_i\in\man{M}$ so
    that $\imath_d(y_i)=x_i$. By lemma
    \ref{DisEm:Lem.SequencesInCauchyCompletetion} we know that
    $x=[\{y_i\}]_d$.  Therefore
    $f(x)=f([\{y_i\}]_d)=[\{y_i\}]_{d'}$. Once again by lemma
    \ref{DisEm:Lem.SequencesInCauchyCompletetion} we know that
    $\{\imath_{d'}(y_i)\}$ converges uniquely to
    $[\{y_i\}]_{d'}=f(x)$.  But
    $f(x_i)=f(\imath_d(y_i))=\imath_{d'}(y_i)$, and therefore
    $f(x_i)\to f(x)$ as required. Thus for all sequences $\seq s$ lying in $\imath_d\man{M}$ so that
    $\seq s\to x\in\bound[\imath_d]{M}$ uniquely we know that $f(\seq s)\to f(x)$ uniquely. Since the argument
    of this paragraph can also be applied to $f^{-1}$ we know that for all sequences $\seq s\subset\imath{d'}\man{M}$
    so that $\seq s\to t\in\bound[\imath_{d'}]{M}$ uniquely, we have that $f^{-1}(\seq s)\to f^{-1}(y)$ uniquely. We use
    these facts below.

    \textbf{2} Second, let $\{x_i\}$
    be a sequence in $\bound[\imath_d]{M}$ that converges uniquely to
    $x\in\bound[\imath_d]{M}$ and suppose that $\{f(x_i)\}$ has no limit points in $\man{M}^{d'}$.
    We may choose an open neighbourhood $V$ of $f(x)$ so that for
    all $i$, $f(x_i)\not\in V$.  For each $i$ we can choose a
    sequence $\{f(\imath_d y^i_j)\}$ converging uniquely to $f(x_i)$.  This
    implies that $\{f(\imath_d y^i_j)\}\cap V$ must be finite and hence,
    without loss of generality, we can assume that $\{f(\imath_d y^i_j)\}\cap
    V=\EmptySet$ and therefore that for all $i,j$, $f(\imath_d y^i_j)\not\in
    V$.  From paragraph \textbf{1} we can conclude that for
    each $i$ the sequence $\{\imath_d y^i_j\}$ must converge uniquely to
    $x_i$.  Now, by construction, for the sequence
    $\seq s=\bigcup_{i,j}\{\imath_d(y^i_j)\}$, we can conclude that $f(\seq s)$
    does not have
    $f(x)$ as a limit point. The
    sequence $\seq s$ has $x_i$ as a limit point for each $i$ and
    as $\{x_i\}\to x$ we know that $\seq s$ has $x$ as a limit
    point. Choose a
    subsequence, $\seq p=\{\imath_d(p_k)\}$ of $\seq s$ so that
    $p_k\in\man{M}$ and $\seq p\to x$.  Hence, from above, we know
    that $x=[\{p_k\}]_d$ and that $\{f(\imath_d p_k)=\imath_{d'}(p_k)\}\to
    [\{p_k\}]_{d'}=f(x)$. This is a contradiction, since $\{f(\imath_d p_k)\}
    \to f(x)$ implies that $f(\seq p)\cap V\not =\EmptySet$, but
    $\seq p\subset\seq s$ and $f(\seq s)\cap V=\EmptySet$. Therefore $\{f(x_i)\}$
    has $f(x)$ as a
    limit point.

    \textbf{3} Third, we will now show that $f(x)$ is the unique limit point of
    $\{f(x_i)\}$, where $\{x_i\}$ and $x$ are the sequence and point of paragraph \textbf{2}.
    Suppose that $q\in\man{M}^{d'}$ is a limit point of $\{f(x_i)\}$; since $\{x_i\}\subset\partial\imath_d\man{M}$ and
    $\{f(x_i)\}\subset\partial\imath_{d'}\man{M}$ we know that $q\in\partial\imath_{\imath_{d'}}\man{M}$.
    We shall show that $q=f(x)$. Since $\{x_i\}$ converges to $x$ uniquely, we may choose a sequence of
        open
        neighbourhoods, $V_i$, so that $x_i\in V_i$, for all
        $j\neq i$, $x_j\not\in V_i$ and for all $i, j,\ i\not = j$,
        $\overline{V_i}\cap\overline{V_j}=\EmptySet$.  
        Let
        $\{\imath_d y_j^i\}\subset\phi\man{M}$ be a sequence that
        converges uniquely to $x_i$ and is such that for all $j$,
        $\imath_d y^i_j\in V_i$. Let
        $\seq s=\bigcup_{i,j}\{\imath_d y_j^i\}$ be a new sequence formed
        from the union of the $\{\imath_d y_j^i\}$'s.  Since $\{f(x_i)\}$ has
        $q$ as a limit point we know that $\{f(\seq s)\}$ must also
        have $q$ as a limit point.  From paragraph \textbf{1} 
        we know that $f^{-1}(q)$ must be a limit point of
        $\seq s$. By construction this implies that $f^{-1}(q)$
        is either equal to $x_i$ for some $i$ or equal to $x$.

        \textbf{4} If $f^{-1}(q)=x$ then we are done, so suppose that there
        exists $l$ so that $q=f(x_l)$.  Since $q$ is a limit point of $\{f(x_i)\}$
        we can choose a subsequence, $\seq q=\{q_k\}$ of
        $\{f(\seq s)\}$ so that $q_r\in\{f(\imath_d y_j^r)\}$ and $\seq q$ uniquely converges to
        $q$. From paragraph \textbf{1} we know that $\{f^{-1}(q_r)\}$ must have
        $f^{-1}(q)=x_l$ as a unique limit point. This implies that
        $\{f^{-1}(q_r)\}\cap V_l$ must be infinite. But by
        construction we know that for all $r\not =l$,
        $y^r_j\not\in V_l$, and since $q_r=y^r_{j}$ for some $j$
        we know that $\{f^{-1}(q_r)\}\cap V_l$ is either empty or contains
        only the element $f^{-1}(q_l)$.  Therefore we have a contradiction
        and $q=f(x)$ as required. Thus for all sequences $\seq s$ lying in $\bound[\imath_d]{M}$ with the
        unique limit point $x$ we know that $f(\seq s)$ has the unique limit point $f(x)$ as required. Since the
        argument of the last three paragraphs applies to $f^{-1}$ we also know that 
        for all sequences $\seq s\subset\imath_{d'}\man{M}$ converging to $x\in\bound[\imath_{d'}]{M}$ that
        $f^{-1}(\seq s)\to f^{-1}(x)$.

    \textbf{5} Fourth, suppose that $\seq x$ is a sequence in $\man{M}^d$ so that $\seq x\to x\in\partial\imath_d\man{M}$. Suppose that
    $\seq x\cap\partial\imath_d\man{M}$ is finite and let $\seq w$ be the sequence given by $\seq x\cap\imath_d{\man{M}}$.  Then 
    from above we know that $f(\seq w)\to f(x)$ uniquely.  Since $\seq x-\seq w$ is finite we also know that $f(\seq x)$ must converge
    uniquely to $f(x)$.  We can use the same technique to show that $f(\seq x)\to f(x)$ uniquely if $\seq x\cap\imath_d\man{M}$ is finite. 
    So suppose that $\seq x\cap\imath_d\man{M}$ and $\seq x\cap\partial\imath_d\man{M}$ are infinite and let $\seq u=\seq x\cap\imath_d\man{M}$ and
    $\seq v=\seq x\cap\partial\imath_d\man{M}$.  From above we know that $f(\seq u)\to f(x)$ uniquely and that $f(\seq v)\to f(x)$ uniquely, therefore
    $f(\seq x)=f(\seq u)\cup f(\seq v)$ is such that $f(\seq x)\to f(x)$ uniquely. 

    The continuity of $f^{-1}$ follows by symmetry.

    $\Leftarrow$

    Let $\{x_i\}\in\Cau[d]$. Then $[\{x_i\}]_{d}\in\man{M}^d$
    and the sequence $\{\imath_d(x_i)\}_i$ converges uniquely to
    $[\{x_i\}]_d$.  Since
    $f(\imath_d(x_i))=\imath_{d'}(x_i)$ and as $f$ is a
    homeomorphism we know that $\{\imath_{d'}(x_i)\}_i$
    must uniquely converge to $[\{x_i\}]_{d'}$. Thus $\{x_i\}\in\Cau[d']$.

    By symmetry we can conclude that $\Cau[d]=\Cau[d']$ and that
    $d\simeq d'$ as required.
  \end{proof}

    This gives a useful corollary.

  \begin{Cor}\label{CorSusan}
    Let $d\in\Dis$, then there exists a
    homeomorphism $f:\man{M}^d\to\overline{\phi(\man{M})}$ so that
    $f\imath_d=\phi$, where $\phi\in E(d)$.
  \end{Cor}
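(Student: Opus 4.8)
The plan is to factor the desired homeomorphism through the Cauchy completion of a distance built directly from $\phi$, and then invoke the propositions already established. Fix an envelopment $\phi\in E(d)$ together with a complete distance $d'$ on $\man{M}_\phi$ satisfying $d'|_{\phi(\man{M})\times\phi(\man{M})}\simeq d$, as guaranteed by definition \ref{DisEmb:Def.EnvelopebleDistance}. First I would pull $d'$ back along $\phi$ to a distance $d_\phi$ on $\man{M}$, setting $d_\phi(x,y)=d'(\phi(x),\phi(y))$. Since $\phi$ is then an isometry of $(\man{M},d_\phi)$ onto $(\phi(\man{M}),d'|_{\phi(\man{M})})$, the defining condition of $E(d)$ reads precisely as $d_\phi\simeq d$, that is, $\Cau[d_\phi]=\Cau[d]$.

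Next I would apply proposition \ref{DisEm:Prop.DistanceEquivalenceStatmentsNotInDis} to the equivalent distances $d$ and $d_\phi$, obtaining a homeomorphism $g:\man{M}^d\to\man{M}^{d_\phi}$ with $g\imath_d=\imath_{d_\phi}$. It then remains to produce a homeomorphism $h:\man{M}^{d_\phi}\to\overline{\phi(\man{M})}$ with $h\imath_{d_\phi}=\phi$; for then $f:=h\circ g$ is a homeomorphism $\man{M}^d\to\overline{\phi(\man{M})}$ with $f\imath_d=hg\imath_d=h\imath_{d_\phi}=\phi$, as required.

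To build $h$, I would send a class $[\seq s]_{d_\phi}$ to the limit in $\man{M}_\phi$ of $\{\phi(s_i)\}$. This limit exists because $\seq s$ is $d_\phi$-Cauchy, so $\{\phi(s_i)\}$ is $d'$-Cauchy and $d'$ is complete; and since $\{\phi(s_i)\}\subset\phi(\man{M})$ its limit lies in $\overline{\phi(\man{M})}$. Using that $d_\phi^{*}([\seq u],[\seq v])=\lim_{i,j}d_\phi(u_i,v_j)=\lim_{i,j}d'(\phi(u_i),\phi(v_j))$ equals, by continuity of $d'$, the $d'$-distance between the two limit points, one checks that $h$ is a well-defined bijective isometry from $(\man{M}^{d_\phi},d_\phi^{*})$ onto $\overline{\phi(\man{M})}$ equipped with $d'$. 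As $d'$ induces the manifold topology on $\man{M}_\phi$, this isometry is a homeomorphism onto the subspace $\overline{\phi(\man{M})}$, and $h\imath_{d_\phi}(x)=h([\seq w_x]_{d_\phi})=\phi(x)$ gives $h\imath_{d_\phi}=\phi$.

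The routine bookkeeping — surjectivity and injectivity of $h$, and that $d_\phi^{*}$ transports to $d'$ — is an instance of the standard fact that the metric completion of a dense subspace of a complete metric space is its closure, and this is exactly where the completeness of $d'$, hence the choice $\phi\in E(d)$, is used. The only genuinely delicate point is this identification of $\man{M}^{d_\phi}$ with $\overline{\phi(\man{M})}$ \emph{as topological spaces}: one must confirm that the topology on $\man{M}^{d_\phi}$ coming from $d_\phi^{*}$ (definition \ref{DisEm:Def.CauchyCompletion}) corresponds under $h$ to the subspace topology $\overline{\phi(\man{M})}$ inherits from $\man{M}_\phi$, which follows once $d'$ is known to be compatible with the manifold topology.
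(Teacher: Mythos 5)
Your proposal is correct and follows essentially the same route as the paper's own proof: both factor $f$ through the Cauchy completion of the distance pulled back along $\phi$, use proposition \ref{DisEm:Prop.DistanceEquivalenceStatmentsNotInDis} to identify $\man{M}^d$ with that completion, and then identify that completion with $\overline{\phi(\man{M})}$. The only difference is presentational: where the paper appeals to ``the universality of the Cauchy completion'' for this last identification, you construct the isometry explicitly, which is exactly the standard fact the paper is invoking.
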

  \begin{proof}
    Since $\phi\in E(d)$ there exists $d_\phi$, a complete distance on
    $\man{M}_\phi$, so that
    $d_\phi|_{\phi(\man{M})\times\phi(\man{M})}\simeq d$.
    Let $d'=d_\phi|_{\phi(\man{M})\times\phi(\man{M})}$.  From proposition
    \ref{DisEm:Prop.DistanceEquivalenceStatmentsNotInDis} we know
    that there exists a homeomorphism $h:\man{M}^d\to\man{M}^{d'}$
    where $h\imath_d=\imath_{d'}$.

    As $d_\phi$ is complete we know that $\overline{\phi(\man{M})}$ is
    homeomorphic to $\man{M}^{d'}$, and by the universality of the Cauchy completion we know that there exists a
    homeomorphism
    $g:\man{M}^{d'}\to\overline{\phi(\man{M})}$, so that $g\imath_{d'}=\phi$.

    Let $f:\man{M}^{d}\to\overline{\phi(\man{M})}$ be defined by
    $f=gh$.  Then $f$ is a homeomorphism and $f\imath_d=gh\imath_d=g\imath_{d'}=\phi$ as
    required.
  \end{proof}

  \begin{Prop}\label{DisEm:Prop.DistanceEquivalenceStatments}
    Let $d, d'\in\Dis$, then $d\simeq d'$
    if and only if $E(d)=E(d')$.
  \end{Prop}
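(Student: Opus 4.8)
The plan is to treat the two implications separately, using in both directions the fact that $\simeq$, being defined by the equality $\Cau[d]=\Cau[d']$ of the associated Cauchy-sequence sets, is a genuine equivalence relation on distances and in particular is transitive.

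For the implication $d\simeq d'\Rightarrow E(d)=E(d')$, I would argue that by symmetry it suffices to prove $E(d)\subset E(d')$. Take any $\phi\in E(d)$. By the definition of $E(d)$ there is a complete distance $d_\phi$ on $\man{M}_\phi$ whose restriction, pulled back to $\man{M}$ via $\phi$, satisfies $d_\phi|_{\phi(\man{M})\times\phi(\man{M})}\simeq d$. Since $d\simeq d'$, transitivity of $\simeq$ gives $d_\phi|_{\phi(\man{M})\times\phi(\man{M})}\simeq d'$, so the same $\phi$, with the same witnessing distance $d_\phi$, lies in $E(d')$. Interchanging the roles of $d$ and $d'$ gives the reverse inclusion, and hence $E(d)=E(d')$. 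This direction needs nothing beyond unwinding the definition of $E$ and transitivity.

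For the converse $E(d)=E(d')\Rightarrow d\simeq d'$, the key observation is that $d\in\Dis$ guarantees $E(d)\neq\EmptySet$, so we may fix a single envelopment $\phi\in E(d)=E(d')$ witnessing membership for both distances at once. Applying corollary \ref{CorSusan} to $d$ with this $\phi$ yields a homeomorphism $f:\man{M}^d\to\overline{\phi(\man{M})}$ with $f\imath_d=\phi$, and applying it to $d'$ with the same $\phi$ yields a homeomorphism $g:\man{M}^{d'}\to\overline{\phi(\man{M})}$ with $g\imath_{d'}=\phi$. Then $g^{-1}f:\man{M}^d\to\man{M}^{d'}$ is a homeomorphism satisfying $(g^{-1}f)\imath_d=g^{-1}\phi=\imath_{d'}$. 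By proposition \ref{DisEm:Prop.DistanceEquivalenceStatmentsNotInDis} the existence of such a homeomorphism is equivalent to $d\simeq d'$, which completes the proof.

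The only place where real content enters is the converse, where one must exhibit a homeomorphism between $\man{M}^d$ and $\man{M}^{d'}$ that intertwines the canonical isometric embeddings $\imath_d$ and $\imath_{d'}$. The point to get right is that $E(d)=E(d')$ together with $d\in\Dis$ allows the choice of one and the same $\phi$ as an intermediary, so that both completions are identified with the single closure $\overline{\phi(\man{M})}$; since the substantive work behind these identifications has already been carried out in corollary \ref{CorSusan} and proposition \ref{DisEm:Prop.DistanceEquivalenceStatmentsNotInDis}, the remaining task is merely to compose the two homeomorphisms and verify that the composite respects the embeddings.
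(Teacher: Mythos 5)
Your proof is correct, and while your forward direction coincides with the paper's (unwinding the definition of $E(d)$ and using transitivity of $\simeq$), your converse takes a genuinely different route. The paper, after fixing a common $\phi$ with witnessing complete distances $d_\phi$ and $d'_\phi$ on $\man{M}_\phi$, argues at the level of sequences: any $\seq u\subset\phi(\man{M})$ Cauchy with respect to $d_\phi$ converges in $\overline{\phi(\man{M})}$ by completeness, hence is Cauchy with respect to $d'_\phi$ as well, giving $\Cau[d_\phi|_{\phi(\man{M})\times\phi(\man{M})}]=\Cau[d'_\phi|_{\phi(\man{M})\times\phi(\man{M})}]$ and then $d\simeq d'$ by transitivity. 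You instead apply corollary \ref{CorSusan} twice with the same $\phi$ to get homeomorphisms $f:\man{M}^d\to\overline{\phi(\man{M})}$ and $g:\man{M}^{d'}\to\overline{\phi(\man{M})}$ intertwining $\imath_d,\imath_{d'}$ with $\phi$, compose them, and invoke proposition \ref{DisEm:Prop.DistanceEquivalenceStatmentsNotInDis} in its ``if'' direction. Both are valid and there is no circularity, since corollary \ref{CorSusan} and proposition \ref{DisEm:Prop.DistanceEquivalenceStatmentsNotInDis} precede this result. The trade-off: the paper's argument is more elementary for this step, needing only completeness of $d_\phi$ and the fact that convergent sequences are Cauchy for any compatible distance, whereas yours leans on the substantial extension machinery behind proposition \ref{DisEm:Prop.DistanceEquivalenceStatmentsNotInDis}; in exchange, your composition-of-homeomorphisms argument is shorter, avoids touching sequences entirely, and matches the style of the well-definedness lemmas in the correspondence section (lemmas \ref{DisEm:Lem.IWellDefined} and \ref{DisEm:Lem.JWellDefined}), so it integrates naturally with how the rest of the paper manipulates these identifications.
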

  \begin{proof}
    $\Rightarrow$
    Let $\phi:\man{M}\to\man{M}_\phi$ be an element of $E(d)$,
    then there exists a complete distance
    $d_\phi:\man{M}_\phi\times\man{M}_\phi\to\mathbb{R}$ so that
    $d_\phi|_{\phi(\man{M})\times\phi(\man{M})}\simeq d$. Since
    $d\simeq d'$ and $\simeq$ is an equivalence relation, we
    can see that $d_\phi|_{\phi(\man{M})\times\phi(\man{M})}\simeq
    d'$. Hence $\phi \in E(d')$.
    By symmetry we can conclude that $E(d)=E(d')$.

    $\Leftarrow$
    Since $E(d)=E(d')$ we can choose $\phi:\man{M}\to\man{M}_\phi$
    an envelopment so that there exists two complete distances
    \[
    d_\phi:\man{M}_\phi\times\man{M}_\phi\to\mathbb{R}
    \]
    and
    \[
    d'_\phi:\man{M}_\phi\times\man{M}_\phi\to\mathbb{R}
    \]
    so that
    \[
    d_\phi|_{\phi(\man{M})\times\phi(\man{M})}\simeq d
    \]
    and
    \[
    d'_\phi|_{\phi(\man{M})\times\phi(\man{M})}\simeq d'.
    \]

    Now let $\seq u$ be a sequence in $\phi(\man{M})$.  If $\seq u$
    is Cauchy with respect to $d_\phi$ then there must exist
    $u\in\overline{\phi(\man{M})}$ so that $\seq u\to u$.  Since
    $\seq u\to u$ we can see that $\seq u$ must also be Cauchy
    with respect to $d'_\phi$.  This argument implies that
    $\Cau[d_\phi|_{\phi(\man{M})\times\phi(\man{M})}]\subset\Cau[d'_\phi|_{\phi(\man{M})\times\phi(\man{M})}]$.
    The same argument can be used, in the reverse direction, to
    show that
    $\Cau[d'_\phi|_{\phi(\man{M})\times\phi(\man{M})}]\subset\Cau[d_\phi|_{\phi(\man{M})\times\phi(\man{M})}]$.
    Hence
    \[
    d_\phi|_{\phi(\man{M})\times\phi(\man{M})} \simeq
    d'_\phi|_{\phi(\man{M})\times\phi(\man{M})}
    \]
    and since $\simeq$ is an equivalence relation we can conclude that
    \[
    d\simeq d'
    \]
    as required.

  \end{proof}

    These results, once combined, can give us the following
    corollary which is a converse of corollary \ref{CorSusan}.

  \begin{Cor}\label{CorSusanConv}
    Let $d$ be a distance on \man{M} and $\phi:\man{M}\to\man{M}_\phi$
    a non-trivial envelopment.  If there exists a
    homeomorphism $f:\man{M}^d\to\overline{\phi(\man{M})}$ so that
    $f\imath_d=\phi$, then $\phi\in E(d)$ and $d\in\Dis$.
  \end{Cor}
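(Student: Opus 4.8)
The plan is to produce a single complete distance on $\man{M}_\phi$ whose restriction to $\phi(\man{M})$ is equivalent to $d$; then $\phi\in E(d)$ follows directly from definition \ref{DisEmb:Def.EnvelopebleDistance}, and $d\in\Dis$ is immediate because $E(d)\neq\EmptySet$. The guiding observation is that, since $\man{M}_\phi$ is a manifold, it admits some complete distance $d_\phi$, and completeness renders the particular choice of $d_\phi$ irrelevant: for any complete $d_\phi$ and any $\seq{s}\in\Seqz$, the image $\phi(\seq{s})$ is Cauchy with respect to $d_\phi$ if and only if $\phi(\seq{s})$ converges in $\man{M}_\phi$, and this last condition is purely topological.

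First I would fix any complete distance $d_\phi$ on $\man{M}_\phi$ (one exists because a manifold is a separable, completely metrizable space) and define the pull-back distance $\tilde{d}(x,y)=d_\phi(\phi(x),\phi(y))$ on $\man{M}$. By completeness of $d_\phi$, a sequence $\seq{s}\in\Seqz$ lies in $\Cau[\tilde{d}]$ exactly when $\phi(\seq{s})$ converges in $\man{M}_\phi$; and since $\overline{\phi(\man{M})}$ is closed in $\man{M}_\phi$ and contains $\phi(\seq{s})$, this is the same as $\phi(\seq{s})$ converging in $\overline{\phi(\man{M})}$.

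Second I would transfer this description across $f$ to $\Cau[d]$. For $\seq{s}\in\Seqz$, the sequence $\seq{s}$ is Cauchy with respect to $d$ precisely when $\{\imath_d(s_i)\}$ converges in the complete space $\man{M}^d$, where $\imath_d$ is an isometry (lemma \ref{DisEm:Lem.SequencesInCauchyCompletetion}). Because $f$ is a homeomorphism satisfying $f\imath_d=\phi$, the sequence $\{\imath_d(s_i)\}$ converges in $\man{M}^d$ if and only if $\{f(\imath_d(s_i))\}=\{\phi(s_i)\}$ converges in $\overline{\phi(\man{M})}$, using continuity of $f$ for one implication and of $f^{-1}$ for the other. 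Combined with the previous paragraph this yields $\Cau[d]=\Cau[\tilde{d}]$, that is $\tilde{d}\simeq d$, so that $d_\phi$ witnesses $\phi\in E(d)$ and therefore $d\in\Dis$.

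The only conceptual step is the homeomorphism argument of the third paragraph, which is also the sole place the hypothesis on $f$ is used; once the identification of $\Cau[d]$ with the topologically defined set of image-convergent sequences is in hand, completeness of $d_\phi$ supplies everything else, and no isometric transport of $d^*$ or extension of a metric from $\overline{\phi(\man{M})}$ to $\man{M}_\phi$ is required. The points demanding routine care are the existence of a complete distance on $\man{M}_\phi$ (standard for manifolds, via metrizability and local compactness) and the bookkeeping distinguishing convergence in the closed subspace $\overline{\phi(\man{M})}$ from convergence in the ambient $\man{M}_\phi$, since the limits of the image sequences a priori lie in $\man{M}_\phi$.
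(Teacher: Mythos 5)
Your proof is correct, and it reaches the conclusion by a genuinely different route from the paper's. The paper likewise fixes a complete distance $d_\phi$ on $\man{M}_\phi$ and sets $d'=d_\phi|_{\phi(\man{M})\times\phi(\man{M})}$, but it then derives $d\simeq d'$ purely formally: since $\phi\in E(d')$ by definition, corollary \ref{CorSusan} gives a homeomorphism $h:\man{M}^{d'}\to\overline{\phi(\man{M})}$ with $h\imath_{d'}=\phi$, whence $g=h^{-1}f:\man{M}^{d}\to\man{M}^{d'}$ is a homeomorphism with $g\imath_d=\imath_{d'}$, and proposition \ref{DisEm:Prop.DistanceEquivalenceStatmentsNotInDis} then yields $d\simeq d'$. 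You bypass both of those results and argue directly at the level of sequences: completeness of $d_\phi$ identifies $\Cau[\tilde{d}]$ with the set of $\seq{s}\in\Seqz$ whose images $\phi(\seq{s})$ converge in $\overline{\phi(\man{M})}$, and the hypothesis on $f$, together with lemma \ref{DisEm:Lem.SequencesInCauchyCompletetion}, identifies $\Cau[d]$ with the same set. What the paper's route buys is economy within its own framework: the corollary becomes a short composition of already-proved results, in the same style as its well-definedness lemmas for $I$ and $J$, though at the cost of invoking heavy machinery (the long sequential-continuity argument behind proposition \ref{DisEm:Prop.DistanceEquivalenceStatmentsNotInDis} and the universality argument behind corollary \ref{CorSusan}). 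What your route buys is self-containment and transparency: you need only the trivial lemma \ref{DisEm:Lem.SequencesInCauchyCompletetion}, and your argument makes explicit the key mechanism -- completeness turns the Cauchy condition into a purely topological convergence condition -- which is also why the choice of $d_\phi$ is immaterial, a point the paper handles in a separate lemma on the independence of $J$ from the choice of $d_\phi$. One bookkeeping remark: your $\tilde{d}$ is the pullback to $\man{M}$, whereas definition \ref{DisEmb:Def.EnvelopebleDistance} compares $d$ with the restriction to $\phi(\man{M})\times\phi(\man{M})$; identifying the two via $\phi$ is exactly the convention the paper itself adopts (e.g., in its proof that $\Dis=\EmptySet$ if and only if $\man{M}$ is compact), so this is not a gap.
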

  \begin{proof}
    Let $d_\phi$ be a complete distance on $\man{M}_\phi$.  Let
    $d'=d_\phi|_{\phi(\man{M})\times\phi(\man{M})}$ then, by
    definition, $\phi\in E(d')$ and so $d'\in\Dis$. From corollary
    \ref{CorSusan} there must exist
    a homeomorphism $h:\man{M}^{d'}\to\overline{\phi(\man{M})}$
    so that $h\imath_{d'}=\phi$.

    Let $g:\man{M}^d\to\man{M}^{d'}$ be defined by $g=h^{-1}f$.
    Then $g$ is a homeomorphism and $g\imath_d=h^{-1}f\imath_{d}=h^{-1}\phi=\imath_{d'}$.
    Hence, by proposition
    \ref{DisEm:Prop.DistanceEquivalenceStatmentsNotInDis}, we can
    see that $d\simeq d'$.  Therefore, from definition
    \ref{DisEmb:Def.EnvelopebleDistance}, $\phi\in E(d)$ and $d\in\Dis$ as
    required.
  \end{proof}

  \subsection{The alternative proof of proposition \ref{DisEm:Prop.DistanceEquivalenceStatmentsNotInDis}}\label{distance Cauchy}
    
    Just as we gave an alternative proof of proposition \ref{DisEm:Prop.ConditionsForEquivalenceOfEmbeddings} using
    the extension theorem \ref{newTHerem} for Cauchy continuous functions, we now give an alternative proof of
    proposition \ref{DisEm:Prop.DistanceEquivalenceStatmentsNotInDis}.

  \begin{proof}[Alternative proof of proposition \ref{DisEm:Prop.DistanceEquivalenceStatmentsNotInDis}]
    Suppose that $d\simeq d'$. Give $\man{M}$ the Cauchy structure, $C_d$, induced by $d$ and give $\man{M}^{d'}$ the Cauchy structure,
    $C_{d'}$, given by 
        $d'$.  That is, $\seq s\in C_d$ ($\seq s\in C_{d'}$) if and only if $\seq s$ is Cauchy with
        respect to $d$ ($d'^*$). Note that
        $\seq s\in C_d$ implies $\seq s\subset\man{M}$ while $\seq s\in C_{d'}$ implies
        that $\seq s\subset \man{M}^{d'}$. Define $f:\man{M}\to\man{M}^{d'}$ by $f(x)=\imath_{d'}(x)$. Essentially we will
        show that $\imath_{d'}$ is Cauchy continuous with respect to the Cauchy structure induced by $d$.

        Let $\seq s\in C_d$. If $\seq s$ converges to $s$ in $\man{M}$ then, by the
        continuity of $\imath_{d'}$, we know that $f(\seq s)=\imath_{d'}(\seq s)$ must converge to $\imath_{d'}(s)$. That is, 
        $f(\seq s)$ is Cauchy with respect to $d'^*$.   
        So suppose that $\seq s\in C_d$ but does not have a limit point in $\man{M}$. 
        We then know that $\seq s\in C(d)$. As $C(d)=C(d')$ we can immediately conclude that
        $f(\seq s)\in C_{d'}$ and hence $f$ is Cauchy continuous.
        Thus, by theorem \ref{newTHerem}, there exists a unique extension of $f$, $\hat{f}:\man{M}^d\to\man{M}^{d'}$ so that $\hat f\imath_d=\imath_{d'}$. 
        Moreover,
        by the uniqueness of $\hat f$ 
        and, as the argument above is symmetric in $d$ and $d'$, we can see that $\hat f$ must be a homeomorphism.
        
        Suppose that we have such an $f$. Let $\seq s\in C(d)$. As $f$ is continuous, 
        $f\imath_d(\seq s)$ must be Cauchy with respect to $d'^*$ and cannot have a limit
        point in $\imath_{d'}(\man{M})$. Therefore $\seq s\in C(d')$. By symmetry we can conclude that $C(d)=C(d')$ as required.
  \end{proof}

\section{A correspondence between the equivalence classes}\label{sec.correspondence}

    We begin with some definitions.

  \begin{Def}
    Let $\Phi$ be the set of all envelopments of $\man{M}$.
  \end{Def}

  \begin{Def}
    Let $[\phi]$ denote the equivalence class of $\phi\in\Phi$
    under the equivalence relation $\simeq$.
  \end{Def}

  \begin{Def}
    Let $[d]$ denote the equivalence class of $d\in \Dis$ under the
    equivalence relation $\simeq$.
  \end{Def}

    We will show that there is a one-to-one correspondence
    between $\frac{\Dis}{\simeq}$ and $\frac{\Phi}{\simeq}$ by
    constructing two functions which are inverses of each other\footnote{In principle this correspondence can be derived using theorem \ref{newTHerem} by comparing 
    the Cauchy structures induced by an envelopment $\phi:\man{M}\to\man{M}_\phi$ and the Cauchy structure induced by a 
    complete distance on $\man{M}_\phi$. We choose not to follow this route, opting instead to give explicit definitions
    for the various maps. This does not entail additional complications in the proofs.}.
    First, we give the function from $\frac{\Dis}{\simeq}$ to
    $\frac{\Phi}{\simeq}$.

  \begin{Def}
    For each $d\in \Dis$ choose $\phi\in E(d)$. We shall denote this
    chosen
    envelopment by $\phi_d$.
    Define $I:\frac{\Dis}{\simeq}\to\frac{\Phi}{\simeq}$ by letting
    $I([d])=[\phi_d]$.
  \end{Def}

  \begin{Lem}\label{DisEm:Lem.IWellDefined}
    The function $I$ is well defined.
  \end{Lem}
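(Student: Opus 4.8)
The plan is to show that $I([d])$ is independent of the choice of representative of the class $[d]$. Since $\phi_d$ is a choice fixed once and for all for each individual distance $d$, the only thing to verify is that $d\simeq d'$ implies $\phi_d\simeq\phi_{d'}$, i.e. $[\phi_d]=[\phi_{d'}]$. The strategy is to manufacture a homeomorphism between $\overline{\phi_d(\man{M})}$ and $\overline{\phi_{d'}(\man{M})}$ that intertwines the two envelopments, and then to invoke the implication $2\Rightarrow 1$ of proposition \ref{DisEm:Prop.ConditionsForEquivalenceOfEmbeddings}.

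First I would use proposition \ref{DisEm:Prop.DistanceEquivalenceStatments}: since $d\simeq d'$ with $d,d'\in\Dis$, we have $E(d)=E(d')$. In particular $\phi_{d'}\in E(d')=E(d)$, so both chosen envelopments $\phi_d$ and $\phi_{d'}$ lie in $E(d)$. This reduction is the key structural step, because it allows corollary \ref{CorSusan} to be applied to both envelopments through the single completion $\man{M}^d$.

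Next I would apply corollary \ref{CorSusan} twice. Applied to $\phi_d\in E(d)$ it yields a homeomorphism $f:\man{M}^d\to\overline{\phi_d(\man{M})}$ with $f\imath_d=\phi_d$; applied to $\phi_{d'}\in E(d)$ it yields a homeomorphism $g:\man{M}^d\to\overline{\phi_{d'}(\man{M})}$ with $g\imath_d=\phi_{d'}$. Composing, $g\circ f^{-1}:\overline{\phi_d(\man{M})}\to\overline{\phi_{d'}(\man{M})}$ is a homeomorphism, and $(g\circ f^{-1})\phi_d = g f^{-1} f\imath_d = g\imath_d = \phi_{d'}$. Thus $g\circ f^{-1}$ intertwines the envelopments, so condition $2$ of proposition \ref{DisEm:Prop.ConditionsForEquivalenceOfEmbeddings} holds for the pair $(\phi_d,\phi_{d'})$; the implication $2\Rightarrow 1$ gives $\phi_d\simeq\phi_{d'}$, whence $I([d])=[\phi_d]=[\phi_{d'}]=I([d'])$, establishing that $I$ is well defined.

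I do not expect a serious obstacle: the argument is a routine composition of the homeomorphisms supplied by the earlier results. The one point requiring care is ensuring that $\phi_{d'}$ may be treated as an element of $E(d)$, not merely of $E(d')$, which is precisely what proposition \ref{DisEm:Prop.DistanceEquivalenceStatments} provides; without that identification one could not route both envelopments through the same completion $\man{M}^d$ and the commuting triangles $f\imath_d=\phi_d$, $g\imath_d=\phi_{d'}$ would not align. Everything else follows formally.
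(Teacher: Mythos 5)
Your proof is correct, but it takes a different decomposition from the paper's. The paper keeps both Cauchy completions in play: from $d\simeq d'$ it extracts, via proposition \ref{DisEm:Prop.DistanceEquivalenceStatmentsNotInDis}, a homeomorphism $g:\man{M}^d\to\man{M}^{d'}$ satisfying $g\imath_d=\imath_{d'}$, takes the two homeomorphisms $h:\man{M}^d\to\overline{\phi_d(\man{M})}$ and $f:\man{M}^{d'}\to\overline{\phi_{d'}(\man{M})}$ supplied by corollary \ref{CorSusan}, and feeds the composite $fgh^{-1}$ (which satisfies $fgh^{-1}\phi_d=\phi_{d'}$) into the implication $2\Rightarrow 1$ of proposition \ref{DisEm:Prop.ConditionsForEquivalenceOfEmbeddings}. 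You instead convert the hypothesis $d\simeq d'$ into $E(d)=E(d')$ using proposition \ref{DisEm:Prop.DistanceEquivalenceStatments}, which lets you regard $\phi_{d'}$ as an element of $E(d)$ and run everything through the single completion $\man{M}^d$, composing only the two maps from corollary \ref{CorSusan}. Both arguments have the same shape (compose previously established homeomorphisms, then apply $2\Rightarrow 1$) and the same depth, since corollary \ref{CorSusan} itself rests on proposition \ref{DisEm:Prop.DistanceEquivalenceStatmentsNotInDis}; you trade the paper's direct use of that proposition for the (easy, transitivity-based) forward direction of proposition \ref{DisEm:Prop.DistanceEquivalenceStatments}. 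A pleasant by-product of your arrangement: specialised to the case where both envelopments lie in the same $E(d)$, your argument is verbatim the paper's later proof that $I$ is independent of the choice of $\phi_d$, so in your formulation well-definedness and choice-independence become literally one argument rather than two.
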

  \begin{proof}
    Let $d,d'\in \Dis$ such that $d\simeq d'$.  Since
    $\phi_d\in E(d)$ and $\phi_{d'}\in E(d')$ there exist
    two homeomorphisms $h:\man{M}^d\to\overline{\phi_d(\man{M})}$
    and $f:\man{M}^{d'}\to\overline{\phi_{d'}(\man{M})}$. Also,
    as $d\simeq d'$ there exists a homeomorphism
    $g:\man{M}^d\to\man{M}^{d'}$. Therefore
    $fgh^{-1}:\overline{\phi_d(\man{M})}\to\overline{\phi_{d'}(\man{M})}$
    is a homeomorphism. By construction,
    $fgh^{-1}\phi_d=\phi_{d'}$, and therefore, by proposition
    \ref{DisEm:Prop.ConditionsForEquivalenceOfEmbeddings},
    $\phi_d\simeq \phi_{d'}$.
    Thus $I([d])=I([d'])$ and $I$ must be well defined.
  \end{proof}

    Now we construct the function from $\frac{\Phi}{\simeq}$ to
    $\frac{\Dis}{\simeq}$.

  \begin{Def}\label{JDef}
    Let $\psi\in \Phi$ and choose
    $d:\man{M}_\psi\times\man{M}_\psi\to\mathbb{R}$ to be a complete distance. Let
    $d_\psi:\man{M}\times\man{M}\to\mathbb{R}$ be defined by
    $d_\psi(x,y)=d(\psi(x),\psi(y))$, for all $x,y\in\man{M}$.
    Note that, by construction, $\psi\in
    E(d_\psi)$ and thus $d_\psi\in\Dis$.
    Define $J:\frac{\Phi}{\simeq}\to\frac{\Dis}{\simeq}$ by
    $J([\psi])=[d_\psi]$.
  \end{Def}

  \begin{Lem}\label{DisEm:Lem.JWellDefined}
    The function $J$ is well defined.
  \end{Lem}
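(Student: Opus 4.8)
The goal is to show that the map $J:\frac{\Phi}{\simeq}\to\frac{\Dis}{\simeq}$, defined by $J([\psi])=[d_\psi]$, does not depend on the representative $\psi$ of its equivalence class nor on the choice of complete distance $d$ on $\man{M}_\psi$ used to build $d_\psi$. Concretely, given two equivalent envelopments $\psi\simeq\psi'$, with associated complete distances $d$ on $\man{M}_\psi$ and $d'$ on $\man{M}_{\psi'}$ giving rise to $d_\psi$ and $d_{\psi'}$ respectively, I must establish that $d_\psi\simeq d_{\psi'}$. My plan is to produce this equivalence by exhibiting a homeomorphism $\man{M}^{d_\psi}\to\man{M}^{d_{\psi'}}$ commuting with the isometric inclusions, and then invoke proposition \ref{DisEm:Prop.DistanceEquivalenceStatmentsNotInDis}, whose ``$\Leftarrow$'' direction converts such a homeomorphism into the desired distance equivalence.

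First I would chain together the homeomorphisms already made available earlier in the excerpt. Since $\psi\in E(d_\psi)$ and $\psi'\in E(d_{\psi'})$ (as noted in definition \ref{JDef}), corollary \ref{CorSusan} supplies homeomorphisms $g:\man{M}^{d_\psi}\to\overline{\psi(\man{M})}$ with $g\,\imath_{d_\psi}=\psi$ and $g':\man{M}^{d_{\psi'}}\to\overline{\psi'(\man{M})}$ with $g'\,\imath_{d_{\psi'}}=\psi'$. Next, the hypothesis $\psi\simeq\psi'$ together with proposition \ref{DisEm:Prop.ConditionsForEquivalenceOfEmbeddings} (the equivalence of items 1 and 2) furnishes a homeomorphism $h:\overline{\psi(\man{M})}\to\overline{\psi'(\man{M})}$ satisfying $h\psi=\psi'$. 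The composite $F=(g')^{-1}\,h\,g:\man{M}^{d_\psi}\to\man{M}^{d_{\psi'}}$ is then a homeomorphism, and I would verify directly that it respects the inclusions: $F\,\imath_{d_\psi}=(g')^{-1}h\,g\,\imath_{d_\psi}=(g')^{-1}h\psi=(g')^{-1}\psi'=(g')^{-1}g'\,\imath_{d_{\psi'}}=\imath_{d_{\psi'}}$. This is exactly the hypothesis needed for proposition \ref{DisEm:Prop.DistanceEquivalenceStatmentsNotInDis}, which then yields $d_\psi\simeq d_{\psi'}$, i.e.\ $[d_\psi]=[d_{\psi'}]$ and hence $J([\psi])=J([\psi'])$.

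The genuinely delicate point, which I expect to be the main obstacle, is that $J$ involves \emph{two} independent choices rather than one: the choice of representative envelopment and the choice of complete distance on the enveloping manifold. The argument above handles both simultaneously, because $d$ and $d'$ are never assumed related beyond each being complete on its respective manifold; their only role is to guarantee, via corollary \ref{CorSusan}, that $\overline{\psi(\man{M})}$ and $\overline{\psi'(\man{M})}$ are realised as Cauchy completions $\man{M}^{d_\psi}$ and $\man{M}^{d_{\psi'}}$. It is worth isolating the special case $\psi=\psi'$ with two different complete distances $d,d'$ to confirm that the construction collapses correctly: there $h$ may be taken to be the identity, $F=(g')^{-1}g$, and the same computation gives $F\,\imath_{d_\psi}=\imath_{d_{\psi'}}$, so independence of the metric choice is subsumed in the general argument. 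Throughout, the care required is purely bookkeeping of which map commutes with which inclusion; the substantive analytic content has already been discharged in propositions \ref{DisEm:Prop.ConditionsForEquivalenceOfEmbeddings} and \ref{DisEm:Prop.DistanceEquivalenceStatmentsNotInDis} and corollary \ref{CorSusan}, so no new sequence-chasing should be needed here.
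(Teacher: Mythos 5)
Your proof is correct and follows essentially the same route as the paper's: compose the two homeomorphisms from corollary \ref{CorSusan} with the one from proposition \ref{DisEm:Prop.ConditionsForEquivalenceOfEmbeddings}, check the composite intertwines $\imath_{d_\psi}$ and $\imath_{d_{\psi'}}$, and conclude via proposition \ref{DisEm:Prop.DistanceEquivalenceStatmentsNotInDis}. The only (harmless) difference is that you let the complete distances be arbitrary, thereby also absorbing the independence of the choice of $d_\phi$, which the paper instead proves as a separate subsequent lemma.
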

  \begin{proof}
    Let $\psi,\phi\in\Phi$ such that $\psi\simeq\phi$, then there
    exists a homeomorphism
    $g:\overline{\psi(\man{M})}\to\overline{\phi(\man{M})}$. Also,
    since $\psi\in E(d_\psi)$ and $\phi\in E(d_\phi)$ there exist
    homeomorphisms $h:\man{M}^{d_\phi}\to\overline{\phi(\man{M})}$,
    $f:\man{M}^{d_\psi}\to\overline{\psi(\man{M})}$. Hence we have
    the homeomorphism
    $f^{-1}g^{-1}h:\man{M}^{d_\phi}\to\man{M}^{d_\psi}$. By
    construction,
    $f^{-1}g^{-1}h\imath_{d_\phi}=\imath_{d_\psi}$,
    implying $d_\phi\simeq d_\psi$, by proposition
    \ref{DisEm:Prop.DistanceEquivalenceStatmentsNotInDis}.
    Therefore $J([\psi])=J([\phi])$
    so that $J$ is well defined.
  \end{proof}

    As presented $I$ and $J$ are dependent on a choice of an
    envelopment and a distance, respectively.  It turns out that this
    choice is immaterial.

  \begin{Lem}
    For each $d\in\Dis$ the function $I$ is independent of the choice of $\phi_d$.
  \end{Lem}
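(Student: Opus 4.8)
The statement to prove is that the function $I:\frac{\Dis}{\simeq}\to\frac{\Phi}{\simeq}$ is independent of the choice of $\phi_d\in E(d)$. The plan is to show that if $\phi,\phi'$ are two envelopments both lying in $E(d)$ for the same $d\in\Dis$, then $\phi\simeq\phi'$, so that $[\phi]=[\phi']$ and hence $I([d])$ does not depend on which representative we selected. The natural route is to factor both envelopments through the Cauchy completion $\man{M}^d$ using corollary \ref{CorSusan} and then compose the resulting homeomorphisms.

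Concretely, I would argue as follows. Let $\phi,\phi'\in E(d)$. By corollary \ref{CorSusan} there exist homeomorphisms $f:\man{M}^d\to\overline{\phi(\man{M})}$ with $f\imath_d=\phi$ and $f':\man{M}^d\to\overline{\phi'(\man{M})}$ with $f'\imath_d=\phi'$. Then the composite $f'f^{-1}:\overline{\phi(\man{M})}\to\overline{\phi'(\man{M})}$ is a homeomorphism, and by construction
\[
  f'f^{-1}\phi=f'f^{-1}f\imath_d=f'\imath_d=\phi'.
\]
By the equivalence of conditions \textbf{1} and \textbf{2} in proposition \ref{DisEm:Prop.ConditionsForEquivalenceOfEmbeddings}, the existence of such a homeomorphism intertwining the two envelopments gives $\phi\simeq\phi'$, whence $[\phi]=[\phi']$.

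This is essentially a one-line consequence of machinery already established, so I do not anticipate a genuine obstacle; the whole content is bookkeeping with the factorisation through $\man{M}^d$. The only point requiring a moment's care is confirming that both applications of corollary \ref{CorSusan} are legitimate, i.e.\ that both $\phi$ and $\phi'$ really are elements of $E(d)$ so that the hypotheses of the corollary (namely $d\in\Dis$ together with the chosen envelopment lying in $E(d)$) are met for each; this holds precisely because both are drawn from $E(d)$ by assumption. The structure of this proof mirrors that of lemma \ref{DisEm:Lem.IWellDefined}, the difference being that there we varied the distance and here we fix $d$ and vary only the envelopment within the single class $E(d)$.
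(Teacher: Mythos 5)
Your proof is correct and follows essentially the same route as the paper's: two applications of corollary \ref{CorSusan} to factor both envelopments through $\man{M}^d$, composition of the resulting homeomorphisms, and an appeal to the equivalence $\mathbf{1}\Leftrightarrow\mathbf{2}$ of proposition \ref{DisEm:Prop.ConditionsForEquivalenceOfEmbeddings}. The only cosmetic difference is that the paper compares an arbitrary $\psi\in E(d)$ against the chosen representative $\phi_d$, whereas you compare two arbitrary elements of $E(d)$; the argument is identical.
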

  \begin{proof}
    Let $d\in\Dis$, then in order to prove the result
    we need to show that if $\psi\in E(d)$ then $\psi\simeq \phi_d$.

    As $\psi\in E(d)$ we know, from corollary \ref{CorSusan}, that
    there exists a homeomorphism $f:\man{M}^d\to\overline{\psi(\man{M})}$
    so that $f\imath_d=\psi$. Likewise, there must also exist a
    homeomorphism $g:\man{M}^d\to\overline{\phi_d(\man{M})}$
    so that $g\imath_d=\phi_d$.  Let $h=fg^{-1}$, so that
    $h:\overline{\phi_d(\man{M})}\to \overline{\psi(\man{M})}$ is a homeomorphism.  Also we
    know that $h\phi_d=fg^{-1}\phi_d=f\imath_d=\psi$ so that, by
    proposition
    \ref{DisEm:Prop.ConditionsForEquivalenceOfEmbeddings}, we can
    see that $\psi\simeq \phi_d$ as required.
  \end{proof}

  \begin{Lem}
    For each $\phi\in\Phi$ the function $J$ is independent of the choice of $d_\phi$.
  \end{Lem}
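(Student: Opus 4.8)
The plan is to mirror the proof of the preceding lemma (independence of $I$ from the choice of $\phi_d$), swapping the roles of envelopments and distances. Fix $\phi\in\Phi$ and suppose $d,d'$ are two complete distances on $\man{M}_\phi$ giving rise, via the construction in definition \ref{JDef}, to the distances $d_\phi(x,y)=d(\phi(x),\phi(y))$ and $d'_\phi(x,y)=d'(\phi(x),\phi(y))$ on $\man{M}$. Since $J([\phi])=[d_\phi]$, to prove the lemma I need only show that $d_\phi\simeq d'_\phi$, so that $[d_\phi]=[d'_\phi]$ and the class $J([\phi])$ is unchanged.

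First I would record that, by the remark in definition \ref{JDef}, both $\phi\in E(d_\phi)$ and $\phi\in E(d'_\phi)$, and hence $d_\phi,d'_\phi\in\Dis$. This is precisely the hypothesis that permits each of them to be fed into corollary \ref{CorSusan}. Applying corollary \ref{CorSusan} to $\phi\in E(d_\phi)$ produces a homeomorphism $f:\man{M}^{d_\phi}\to\overline{\phi(\man{M})}$ with $f\imath_{d_\phi}=\phi$, and applying it again to $\phi\in E(d'_\phi)$ produces a homeomorphism $g:\man{M}^{d'_\phi}\to\overline{\phi(\man{M})}$ with $g\imath_{d'_\phi}=\phi$.

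The key step is then the composition. The map $h:=g^{-1}f:\man{M}^{d_\phi}\to\man{M}^{d'_\phi}$ is a homeomorphism, and the required intertwining identity drops out of the two relations above:
\[
  h\imath_{d_\phi}=g^{-1}f\imath_{d_\phi}=g^{-1}\phi=\imath_{d'_\phi}.
\]
By proposition \ref{DisEm:Prop.DistanceEquivalenceStatmentsNotInDis} the existence of such a homeomorphism is equivalent to $d_\phi\simeq d'_\phi$, which completes the argument.

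I do not expect a genuine obstacle here, since all the machinery is already in place; the proof is essentially dual to the independence proof for $I$, with corollary \ref{CorSusan} and proposition \ref{DisEm:Prop.DistanceEquivalenceStatmentsNotInDis} playing the roles there taken by corollary \ref{CorSusan} and proposition \ref{DisEm:Prop.ConditionsForEquivalenceOfEmbeddings}. The only point requiring care is bookkeeping: confirming that both applications of corollary \ref{CorSusan} are legitimate (i.e.\ that $\phi$ genuinely lies in $E(d_\phi)$ and in $E(d'_\phi)$, which is immediate from definition \ref{JDef}) and that the composite is formed in the correct order so that the inclusions $\imath_{d_\phi}$ and $\imath_{d'_\phi}$ are intertwined rather than some other pair of maps.
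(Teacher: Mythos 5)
Your proof is correct, but it is not the route the paper takes for this particular lemma. The paper argues directly from the definition of $\simeq$ on distances: it takes $\seq s\in\Cau[d_\phi]$, notes that completeness of $d$ forces $\phi(\seq s)$ to converge to some point of $\partial(\phi(\man{M}))$, observes that a convergent sequence in $\man{M}_\phi$ is Cauchy with respect to \emph{any} compatible distance on $\man{M}_\phi$ (in particular $d'$), and concludes $\seq s\in\Cau[d'_\phi]$; symmetry then gives $\Cau[d_\phi]=\Cau[d'_\phi]$, i.e.\ $d_\phi\simeq d'_\phi$. Your argument instead runs everything through the homeomorphism machinery: corollary \ref{CorSusan} applied twice (legitimately, since $\phi\in E(d_\phi)$ and $\phi\in E(d'_\phi)$ by definition \ref{JDef}), composition to get $g^{-1}f$ intertwining $\imath_{d_\phi}$ and $\imath_{d'_\phi}$, and then the characterisation in proposition \ref{DisEm:Prop.DistanceEquivalenceStatmentsNotInDis}. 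Both are valid. What your version buys is uniformity: it is structurally identical to the paper's proofs of lemmas \ref{DisEm:Lem.IWellDefined}, \ref{DisEm:Lem.JWellDefined} and the independence lemma for $I$, so no new idea is needed. What the paper's version buys is economy: it reveals that this lemma is genuinely easier than its counterpart for $I$ — it needs only the elementary fact that convergence implies Cauchyness for any metric compatible with the topology, and does not lean on corollary \ref{CorSusan}, which itself rests on the long proof of proposition \ref{DisEm:Prop.DistanceEquivalenceStatmentsNotInDis}. In that sense your proof uses strictly heavier tools than necessary, though nothing in it is circular, since \ref{CorSusan} and \ref{DisEm:Prop.DistanceEquivalenceStatmentsNotInDis} are established before this lemma.
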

  \begin{proof}
    Let $\phi:\man{M}\to\man{M}_\phi$ be an element of
    $\Phi$ then, in order to prove our result, we need
    to show that for any two complete distances
    $d:\man{M}_\phi\times\man{M}_\phi\to\mathbb{R}$ and
    $d':\man{M}_\phi\times\man{M}_\phi\to\mathbb{R}$ the induced
    distances $d_\phi:\man{M}\times\man{M}\to\mathbb{R}$ and
    $d'_\phi:\man{M}\times\man{M}\to\mathbb{R}$, given by
    $d_\phi(x,y)=d(\phi(x),\phi(y))$ and
    $d'_\phi(x,y)=d'(\phi(x),\phi(y))$ (for all $x,y\in\man{M}$)
    are equivalent. That is, we
    need to show that $\Cau[d_\phi]=\Cau[d'_\phi]$.

    Let $\seq s\in\Cau[d_\phi]$ then $\seq s$ is Cauchy with
    respect to $d_\phi$ and  since $d$ is complete, by construction,
    there must
    exist $p\in\partial(\phi(\man{M}))$ so that $\phi(\seq s)\to p $
    uniquely.  This implies, however, that $\phi(\seq s)$ will be Cauchy
    with respect to any distance on $\man{M}_\phi$ and therefore
    $\phi(\seq s)$ is Cauchy with respect to $d'$.  Hence, by
    construction, $\seq s$ is Cauchy with respect to $d'_\phi$.
    Therefore $\Cau[d_\phi]\subset\Cau[d'_\phi]$.

    Similarly we can see that
    $\Cau[d'_\phi]\subset\Cau[d_\phi]$ and thus $d_\phi\simeq
    d'_\phi$ as required.
  \end{proof}
  
  Now we present our main results.

  \begin{Lem}\label{DisEm:Lem.JI=1}
    Let $d\in \Dis$, then $JI([d])=[d]$.
  \end{Lem}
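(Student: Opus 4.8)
The plan is to unwind the definitions of $I$ and $J$ and chase the relevant homeomorphisms, showing that the distance we recover after applying $J$ to $I([d])$ is equivalent to the original $d$. The function $I$ first picks an envelopment $\phi_d\in E(d)$ (so $I([d])=[\phi_d]$), and then $J$ picks a complete distance $d'$ on $\man{M}_{\phi_d}$ and forms the pullback $(d')_{\phi_d}(x,y)=d'(\phi_d(x),\phi_d(y))$, giving $JI([d])=[(d')_{\phi_d}]$. To prove $JI([d])=[d]$ it therefore suffices to show $(d')_{\phi_d}\simeq d$, i.e.\ that these two distances induce the same set $\Cau$ of Cauchy sequences (with no limit point in $\man{M}$).

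\textbf{First I would} observe that $\phi_d\in E(d)$ means there is a complete distance $d''$ on $\man{M}_{\phi_d}$ with $d''|_{\phi_d(\man{M})\times\phi_d(\man{M})}\simeq d$, and that $(d')_{\phi_d}$ is (up to the isometry $\phi_d$) just the restriction $d'|_{\phi_d(\man{M})\times\phi_d(\man{M})}$ of another complete distance $d'$ on the \emph{same} enveloping manifold $\man{M}_{\phi_d}$. The key point is that a sequence $\seq s$ in $\man{M}$ is Cauchy with respect to either $(d')_{\phi_d}$ or $d$ precisely when $\phi_d(\seq s)$ converges in $\overline{\phi_d(\man{M})}$, since $\phi_d$ is an isometry onto its image for both distances and both $d'$ and $d''$ are complete on $\man{M}_{\phi_d}$. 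Convergence of $\phi_d(\seq s)$ in the ambient manifold is a \emph{topological} condition independent of which complete distance generates the topology, so the two Cauchy conditions coincide. This is exactly the argument already carried out in the $\Leftarrow$ direction of proposition \ref{DisEm:Prop.DistanceEquivalenceStatments}, which I would invoke directly.

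\textbf{Alternatively, and more cleanly,} I would route the argument through the homeomorphism machinery rather than arguing about Cauchy sequences by hand. By corollary \ref{CorSusan} applied to $d$ and $\phi_d\in E(d)$, there is a homeomorphism $g:\man{M}^d\to\overline{\phi_d(\man{M})}$ with $g\imath_d=\phi_d$. Since $(d')_{\phi_d}\in\Dis$ with $\phi_d\in E((d')_{\phi_d})$ by construction in definition \ref{JDef}, corollary \ref{CorSusan} again gives a homeomorphism $h:\man{M}^{(d')_{\phi_d}}\to\overline{\phi_d(\man{M})}$ with $h\imath_{(d')_{\phi_d}}=\phi_d$. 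Then $h^{-1}g:\man{M}^d\to\man{M}^{(d')_{\phi_d}}$ is a homeomorphism satisfying $h^{-1}g\,\imath_d=h^{-1}\phi_d=\imath_{(d')_{\phi_d}}$, so proposition \ref{DisEm:Prop.DistanceEquivalenceStatmentsNotInDis} yields $d\simeq (d')_{\phi_d}$, whence $[(d')_{\phi_d}]=[d]$ and $JI([d])=[d]$.

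\textbf{The main obstacle} is mostly bookkeeping: keeping straight that $J$ is applied to the class $[\phi_d]$ and so chooses its own complete distance, which a priori differs from the complete distance witnessing $\phi_d\in E(d)$. This potential mismatch is precisely what the preceding lemma (independence of $J$ from the choice of $d_\phi$) neutralises, so the only real content is assembling the two instances of corollary \ref{CorSusan} and feeding the composite into proposition \ref{DisEm:Prop.DistanceEquivalenceStatmentsNotInDis}. No genuinely new estimate is needed.
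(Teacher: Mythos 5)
Your proposal is correct, and your second (``cleaner'') argument is precisely the paper's own proof: two applications of corollary \ref{CorSusan} produce homeomorphisms from $\man{M}^d$ and $\man{M}^{d_{\phi_d}}$ onto $\overline{\phi_d(\man{M})}$ commuting with $\imath_d$, $\imath_{d_{\phi_d}}$ and $\phi_d$, and the composite is fed into proposition \ref{DisEm:Prop.DistanceEquivalenceStatmentsNotInDis} to get $d\simeq d_{\phi_d}$. Your first sketch (comparing Cauchy sequences via a common envelopment, as in the $\Leftarrow$ direction of proposition \ref{DisEm:Prop.DistanceEquivalenceStatments}) is a sound minor variant, but the route you ultimately recommend coincides with the paper's.
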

  \begin{proof}
    Let $I([d])=[\phi_d]$, then $\phi_d\in E(d)$ so there exists a
    homeomorphism
    $f:\man{M}^{d}\to\overline{\phi_d(\man{M})}$. Since
    $J([\phi_d])=[d_{\phi_d}]$ we know that there exists a
    homeomorphism
    $g:\man{M}^{d_{\phi_d}}\to\overline{\phi_d(\man{M})}$. Thus
    $f^{-1}g:\man{M}^{d_{\phi_d}}\to\man{M}^d$ is a homeomorphism.
    By consulting the definitions we can see that
    $f^{-1}g\imath_{d_{\phi_d}}=\imath_d$
    and therefore $d_{\phi_d}\simeq d$.  Hence
    $JI([d])=[d_{\phi_d}]=[d]$.
  \end{proof}

  \begin{Lem}\label{DisEm:Lem.IJ=1}
    Let $\psi\in\Phi$, then $IJ([\psi])=[\psi]$.
  \end{Lem}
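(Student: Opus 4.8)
The plan is to mirror the proof of Lemma \ref{DisEm:Lem.JI=1}, interchanging the roles of envelopments and distances. Unwinding the definitions, $J([\psi])=[d_\psi]$ where, by the construction in definition \ref{JDef}, we already have $\psi\in E(d_\psi)$; and $I([d_\psi])=[\phi_{d_\psi}]$ for the chosen envelopment $\phi_{d_\psi}\in E(d_\psi)$. Thus the whole lemma reduces to showing that $\psi\simeq\phi_{d_\psi}$, after which $IJ([\psi])=I([d_\psi])=[\phi_{d_\psi}]=[\psi]$ immediately.

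To establish $\psi\simeq\phi_{d_\psi}$ I would invoke corollary \ref{CorSusan} twice. Since both $\psi$ and $\phi_{d_\psi}$ lie in $E(d_\psi)$, there exist homeomorphisms $g:\man{M}^{d_\psi}\to\overline{\psi(\man{M})}$ and $f:\man{M}^{d_\psi}\to\overline{\phi_{d_\psi}(\man{M})}$ satisfying $g\imath_{d_\psi}=\psi$ and $f\imath_{d_\psi}=\phi_{d_\psi}$. Composing yields a homeomorphism $fg^{-1}:\overline{\psi(\man{M})}\to\overline{\phi_{d_\psi}(\man{M})}$, and a direct check gives $fg^{-1}\psi=fg^{-1}g\imath_{d_\psi}=f\imath_{d_\psi}=\phi_{d_\psi}$. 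Proposition \ref{DisEm:Prop.ConditionsForEquivalenceOfEmbeddings} then delivers $\psi\simeq\phi_{d_\psi}$, completing the argument.

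The only point requiring care—what passes here for the main obstacle—is that $I$ is defined through a \emph{choice} of envelopment $\phi_{d_\psi}\in E(d_\psi)$ which need not be $\psi$ itself, so the crux is to see that any such choice is equivalent to $\psi$. This is exactly what corollary \ref{CorSusan} guarantees: it furnishes an intermediary homeomorphism factoring through the common Cauchy completion $\man{M}^{d_\psi}$, pinning both envelopments to the same completion. (Equivalently one could cite the lemma establishing that $I$ is independent of the choice of $\phi_d$.) Since everything is supported by the already-established propositions, no new estimates or extension arguments are needed and the computation is purely formal diagram-chasing. Together with Lemma \ref{DisEm:Lem.JI=1}, this shows that $I$ and $J$ are mutually inverse, yielding the claimed one-to-one correspondence between $\frac{\Dis}{\simeq}$ and $\frac{\Phi}{\simeq}$.
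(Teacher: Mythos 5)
Your proposal is correct and follows essentially the same route as the paper's own proof: apply corollary \ref{CorSusan} to both $\psi$ and $\phi_{d_\psi}$ (both lying in $E(d_\psi)$), compose the resulting homeomorphisms through $\man{M}^{d_\psi}$, and conclude $\psi\simeq\phi_{d_\psi}$ via proposition \ref{DisEm:Prop.ConditionsForEquivalenceOfEmbeddings}. The only differences are cosmetic (your composite homeomorphism runs in the opposite direction, which is immaterial), and you make explicit the intertwining identities $f\imath_{d_\psi}=\phi_{d_\psi}$, $g\imath_{d_\psi}=\psi$ that the paper uses implicitly.
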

  \begin{proof}
    Let $J([\psi])=[d_\psi]$ then there exists a homeomorphism
    $f:\man{M}^{d_\psi}\to\overline{\psi(\man{M})}$. Let
    $I([d_\psi])=[\phi_{d_\psi}]$ then there exists a homeomorphism
    $g:\man{M}^{d_\psi}\to\overline{\phi_{d_\psi}(\man{M})}$. Since
    $fg^{-1}:\overline{\phi_{d_\psi}(\man{M})}\to\overline{\psi(\man{M})}$
    is a homeomorphism and as
    $fg^{-1}\phi_{d_\psi}=\psi$ we conclude that $\psi\simeq \phi_{d_\psi}$. Therefore
    $IJ([\psi])=[\phi_{d_\psi}]=[\psi]$.
  \end{proof}  

  \begin{Thm}
    The function $I:\frac{\Dis}{\simeq}\to\frac{\Phi}{\simeq}$ is a
    bijective function with inverse $J$. That is, the sets
    $\frac{\Dis}{\simeq}$ and $\frac{\Phi}{\simeq}$ are in one-to-one
    correspondence with each other.
  \end{Thm}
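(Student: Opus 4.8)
The plan is to read this theorem as a formal consequence of the groundwork already laid. Since Lemmas \ref{DisEm:Lem.IWellDefined} and \ref{DisEm:Lem.JWellDefined} establish that $I:\frac{\Dis}{\simeq}\to\frac{\Phi}{\simeq}$ and $J:\frac{\Phi}{\simeq}\to\frac{\Dis}{\simeq}$ are genuine (well-defined) functions between the two sets of equivalence classes, all that remains is to verify that they are mutually inverse. I would therefore invoke the standard fact that a function possessing a two-sided inverse is automatically a bijection, with that inverse being its unique inverse.

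Concretely, first I would apply Lemma \ref{DisEm:Lem.JI=1}, which gives $JI([d])=[d]$ for every $d\in\Dis$, to conclude that $J\circ I=\mathrm{id}_{\frac{\Dis}{\simeq}}$. Next I would apply Lemma \ref{DisEm:Lem.IJ=1}, which gives $IJ([\psi])=[\psi]$ for every $\psi\in\Phi$, to conclude that $I\circ J=\mathrm{id}_{\frac{\Phi}{\simeq}}$. With both compositions equal to the respective identity maps, $J$ is a two-sided inverse of $I$; hence $I$ is injective (from $J\circ I=\mathrm{id}$) and surjective (from $I\circ J=\mathrm{id}$), so $I$ is a bijection whose inverse is exactly $J$.

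There is no genuine obstacle at this stage: the theorem is a one-line corollary of the two composition lemmas, and the argument is purely formal. All of the substantive difficulty has already been discharged upstream --- in Propositions \ref{DisEm:Prop.ConditionsForEquivalenceOfEmbeddings} and \ref{DisEm:Prop.DistanceEquivalenceStatmentsNotInDis}, whose sequential-continuity arguments supply the homeomorphisms between closures and Cauchy completions, and in Corollary \ref{CorSusan}, which ties an envelopment to its associated distance. The composition lemmas merely chain these homeomorphisms together, so the final deduction simply harvests that work.
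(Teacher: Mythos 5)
Your proof is correct and follows exactly the paper's own argument: the theorem is deduced directly from Lemmas \ref{DisEm:Lem.JI=1} and \ref{DisEm:Lem.IJ=1}, which show $J\circ I$ and $I\circ J$ are the respective identities, so $I$ is a bijection with inverse $J$. Your additional remarks about the substantive work residing upstream in Propositions \ref{DisEm:Prop.ConditionsForEquivalenceOfEmbeddings}, \ref{DisEm:Prop.DistanceEquivalenceStatmentsNotInDis} and Corollary \ref{CorSusan} are accurate but not needed for the deduction itself.
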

  \begin{proof}
    This follows from lemmas \ref{DisEm:Lem.JI=1} and
    \ref{DisEm:Lem.IJ=1}.
  \end{proof}
  
  This theorem shows us that any information that can be
    extracted from $\frac{\Phi}{\simeq}$ (e.g., the $a$-boundary) can
    also be extracted from $\frac{\Dis}{\simeq}$.  

\section{Demonstration of correspondence}\label{sec.demonstration}

  To illustrate how this correspondence can be used we show how the `covering' and `in contact' relations between
  boundary points of the two maximal extensions
  of the Misner space-time, for $t>0$, can be constructed using envelopments or their corresponding distances.
  
  \subsection{The Misner space-time}
  
  The upper-half Misner space-time is the space-time with manifold $\man{M}=\mathbb{R}^+\times S^1$ with metric, in the coordinates $t$ and
  $\psi$, $0<t<\infty$, $0\leq\psi<2\pi$, given by
  \[
    g=\begin{pmatrix}
      \frac{-1}{t} &0\\
      0& t
    \end{pmatrix}.
  \]

  It is well known \cite{HawkingEllis1973} that there exist two maximal extensions of $\man{M}$. Let $\man{M}_{1}=\mathbb{R}\times S^1$
  with coordinates $t$, $\psi_1$, $t\in\mathbb{R}$, $0\leq\psi_1<2\pi$ and metric
  \[
    g_1=\begin{pmatrix}
      0&1\\
      1&t
    \end{pmatrix}.
  \]
  Let $\man{M}_2=\mathbb{R}\times S^1$ with coordinates $t$, $\psi_2$, $t\in\mathbb{R}$, $0\leq\psi_2<2\pi$ and metric
  \[
    g_2=\begin{pmatrix}
      0&-1\\
      -1&t
    \end{pmatrix}.    
  \]
  There are two envelopments $\phi_1:\man{M}\to\man{M}_1$ and $\phi_2:\man{M}\to\man{M}_2$ given by
  \begin{align*}
    \phi_1(t,\psi)&=(t,\psi-\log t \mod 2\pi)\\
    \phi_2(t,\psi)&=(t,\psi+\log t \mod 2\pi).
  \end{align*}
  In both cases $\phi_i(\man{M})$ is isometric to $\man{M}$, $i=1,2$.
  These manifolds and the maps between them provide a wealth of counter-examples for various conjectures in General Relativity.
  
  \subsection{A useful sequence}
  
    For the purposes of this section it is important to construct a sequence in $\mathbb{R}^+$ having certain properties. 
    Let $r\in\mathbb{Q}\cap(0,2\pi)$.
    Let $t_n=nr$, for all $n\in\mathbb{N}$. Since $r$ is rational we know that the
    set $\{t_n \mod 2\pi\}$ is dense in $[0,2\pi)$. Let $s_n=\exp(-t_n)$.

    \subsubsection{Construction of the `in contact' and `covering' relation using envelopments}
      
      Choose $(0,\psi_1)\in\partial(\phi_1(\man{M}))$ and $(0,\psi_2)\in\partial(\phi_2(\man{M}))$. We will show that these two arbitrary points
      on the boundaries
      are in contact and that neither covers the other. 
      Take the sequence $\seq s=\{(\sqrt{s_n},\psi_1+\log\sqrt{s_n} \mod 2\pi)\}$ 
      in $\man{M}$ so that $\phi_1(\seq s)=\{(\sqrt{s_n},\psi_1)\}$ clearly converges to
      $(0,\psi_1)$. We can calculate that $\phi_2(\seq s)=\{(\sqrt{s_n},\psi_1+2\log\sqrt{s_n} \mod 2\pi)\}=\{(\sqrt{s_n},\psi_1-t_n \mod 2\pi)\}$. 
      Since $\{t_n\mod 2\pi\}$ is
      dense in $[0,2\pi)$ there exists a subsequence $\{u_n\}\subset\{t_n\}$ so that $\{u_n \mod 2\pi\}$ converges to $\psi_2-\varphi \mod 2\pi$,
      where $\varphi\in[0,2\pi)$ is arbitrary. Let $q_n=\exp(-u_n)$ and
      $\seq q=\{(\sqrt{q_n},\psi_1+\log\sqrt{q_n} \mod 2\pi)\}$. 
      It is clear that $\seq q$ is a subsequence of $\seq s$ so that $\phi_1(\seq q)$ must converge
      to $(0,\psi_1)$. By construction we also have the following:
      \begin{align*}
        \lim_{n\to\infty}\phi_2(\seq q)&=\lim_{n\to\infty}\{(\sqrt{q_n},\psi_1+2\log\sqrt{q_n} \mod 2\pi)\}\\
          &=\lim_{n\to\infty}\{(\sqrt{q_n},\psi_1-{u_n} \mod 2\pi)\}\\
          &=(0,\psi_1-\psi_2+\varphi \mod 2\pi).
      \end{align*}
      Since $\varphi\in[0,2\pi)$ is arbitrary, the calculation above shows that $\phi_2(\seq s)$ has every point of $\partial(\phi_2(\man{M}))$ as
      a limit point. In particular $\phi_2(\seq q)$ converges to $(0,\psi_2)$ for $\varphi=2\psi_2-\psi_1 \mod 2\pi$. 
      By symmetry this is enough to prove that
      $(0,\psi_1)\in\partial(\phi_1(\man{M}))$ and $(0,\psi_2)\in\partial(\phi_2(\man{M}))$ 
      are in contact but that neither covers the other.
      
    \subsubsection{Construction of the `in contact' and `covering' relation using distances}
    
      Members of the equivalence classes of distances on $\man{M}$ induced by $\phi_1$ and $\phi_2$ can be calculated as follows.
      First we take two complete distances, $d_1$ and $d_2$, on $\man{M}_1$ and $\man{M}_2$. These are, respectively:
      \begin{align*}
        d_1\bigl((t,\psi_1),(t',\psi_1')\bigr)&=\sqrt{\left(t-t'\right)^2+\left(\psi_1-\psi_1'\right)^2}\\
        d_2\bigl((t,\psi_2),(t',\psi_2')\bigr)&=\sqrt{\left(t-t'\right)^2+\left(\psi_2-\psi_2'\right)^2}.
      \end{align*}
      Next we pull these back to $\man{M}$. In an abuse of notation we will also denote the pull backs by $d_1$ and $d_2$. The result is
      \begin{align*}
        d_1\bigl((t,\psi),(t',\psi')\bigr)&=\sqrt{\left(t-t'\right)^2+\bigl((\psi-\log t\mod 2\pi)-(\psi'-\log t'\mod 2\pi)\bigr)^2}\\
        d_2\bigl((t,\psi),(t',\psi')\bigr)&=\sqrt{\left(t-t'\right)^2+\bigl((\psi+\log t\mod 2\pi)-(\psi'+\log t'\mod 2\pi)\bigr)^2}.
      \end{align*}
      These are representatives of the equivalence classes of distances which correspond to the equivalence classes of $\phi_1$ and $\phi_2$.
      
      Let $\psi_1'\in[0,2\pi).$ We can check that the sequence $\seq s_1=\{(\sqrt{s_n},\psi_1'+\log\sqrt{s_n}\mod 2\pi)\}$ in $\man{M}$
      is Cauchy with respect to $d_1$:
      \begin{align*}
        d_1\bigl((\sqrt{s_n},\psi_1'&+\log\sqrt{s_n}\mod 2\pi),(\sqrt{s_m},\psi_1'+\log\sqrt{s_m}\mod 2\pi)\bigr)\\
        &=\sqrt{(\sqrt{s_n}-\sqrt{s_m})^2}\\
        &=\left|\sqrt{s_n}-\sqrt{s_m}\right|.
      \end{align*}
      As $\{s_n\}$ converges to $0$, we can take $n$ and $m$ large enough to make the distance above arbitrarily small. Therefore
      $\seq s_1$ is Cauchy with respect to $d_1$.
      From corollary \ref{CorSusan} and definition \ref{JDef} we can see that $[\{(\sqrt{s_n},\psi_1'+\log\sqrt{s_n}\mod 2\pi)\}]_{d_1}$
      corresponds to the boundary point $(0,\psi_1')$ in $\partial(\phi_1(\man{M}))$.
      Repeating this argument for $\phi_2$ we see that the sequence
      $\seq s_2=\{(\sqrt{s_n},\psi_2'-\log\sqrt{s_n}\mod 2\pi)\}$ is Cauchy with respect to $d_2$
      and corresponds to the boundary point $(0,\psi_2')$ in $\partial(\phi_2(\man{M}))$.
      
      We may now calculate the distance between these sequences with respect to $d_1$:
      \begin{multline*}
       d_1\bigl((\sqrt{s_n},\psi_1'+\log\sqrt{s_n}\mod 2\pi),(\sqrt{s_m},\psi_2'-\log\sqrt{s_m}\mod 2\pi)\bigr)\\
        =\sqrt{(\sqrt{s_n}-\sqrt{s_m})^2+(\psi_1'-\psi_2'-{t_m} \mod 2\pi)^2}.
      \end{multline*}
      Since $\{t_m\mod 2\pi\}$ is dense in $[0,2\pi)$ 
      we can immediately see that for all $\epsilon<0$ there exists $n_0,m_0\in\mathbb{N}$ so that 
      $$d_1\bigl((\sqrt{s_{n_0}},\psi_1'+\log\sqrt{s_{n_0}}\mod 2\pi),(\sqrt{s_{m_0}},\psi_2'-\log\sqrt{s_{m_0}}\mod 2\pi)\bigr)<\epsilon$$
      but that there does not exist $N\in\mathbb{N}$ so that for all $n,m>N$ this is true.
      
      If the points $[\seq s_1]_{d_1}$ and $[\seq s_2]_{d_2}$ covered each other
      then the distance, above, would have to limit to zero as $n,m\to\infty$. We have just demonstrated
      that the distance is not zero in the limit and therefore neither point covers the other. Moreover 
      there exists $\seq q\subset \seq s_2$ so that $[\{(\sqrt{s_n},\psi_1'+\log\sqrt{s_n}\mod 2\pi)\}]_{d_1}=[\seq q]_{d_1}$ and
      therefore the points $[\seq s_1]_{d_1}$ and $[\seq s_2]_{d_2}$ are in contact.
      Thus, under the correspondence, we know that
      $(0,\psi_1')\in\partial(\phi_1(\man{M}))$ and $(0,\psi_2')\in\partial(\phi_2(\man{M}))$ 
      are in contact but neither covers the other.

  \subsection{Discussion}

    The construction of the relations via distances and envelopments are of a similar complexity but use very different techniques. It
    is this difference that this paper advocates. One now has a choice of techniques for working with the Abstract Boundary. We note that
    until an envelopment independent definition of $D(\man{M})$ is given there is still some level of dependence on envelopments.
    This came through above via the use of pull backs to define our distances. This problem is therefore of a pressing nature. Unfortunately this
    is a very difficult problem since it requires a characterization of the distances on the manifold which correspond to envelopments.

    That every boundary point in $\partial(\phi_1(\man{M}))$ is `in contact' with every boundary point in $\partial(\phi_2(\man{M}))$ but no two
    points cover each other expresses the fact that the boundary $\partial(\phi_1(\man{M}))$ is `smeared' over the boundary 
    $\partial(\phi_2(\man{M}))$ and vice versa. The sets $\sigma_{\phi_1}$ and $\sigma_{\phi_2}$ are therefore each partial cross sections
    which contain the same boundary information expressed in very different ways. The boundary information 
    associated with a point of $\partial(\phi_1(\man{M}))$ is spread over every boundary point in $\partial(\phi_2(\man{M}))$.
    This, very odd, behaviour gives an example of how the Abstract Boundary copes with multiple maximal envelopments.

\section{Conclusions}
 
    We have defined equivalence relations on the set of all envelopments of a manifold and a subset of the set
    of all distances on a manifold.  The resulting sets of equivalence classes were then shown to be in one-to-one correspondence
    with each other; hence they are `the same.'  Since the Abstract Boundary can be constructed from $\frac{\Phi}{\simeq}$, we can conclude that 
    it is possible to construct the Abstract Boundary using $\frac{\Dis}{\simeq}$ instead. In a following paper we will
    show how this can be done.  Therefore, instead of thinking about boundary points of a particular envelopment of the manifold, we
    can now think about collections of Cauchy sequences with
    respect to some distance on the manifold.
    
    The Abstract Boundary has already proven to be a very useful, intuitive construction (see \cite{AshleyScott2003}).  By showing how envelopments can be
    replaced by distances, the construction can now be applied in new ways.  Note, however, 
    that to define the set $\Dis$ we had to refer to envelopments, via the sets $E(d)$.  So, while
    we have presented an alternative way to view the Abstract Boundary, in order to fully dissociate the two approaches (distances vs.
    envelopments) we still need to find a definition for $\Dis$ that does not in any way rely on envelopments.  Current research is pursuing
    this goal.  
    
    In summary, we
    have shown that the structure of the edge of a space-time can
    be deduced from knowledge of a distance defined purely on the space-time itself. Given the successes of the Abstract Boundary, this 
    thereby provides new tools when 
    working with the edge of a space-time.
    
      By giving the
    relationship between $\frac{\Phi}{\simeq}$ and $\frac{\Dis}{\simeq}$ we have demonstrated how a fundamental building block of the Abstract Boundary can be 
    replaced when required.
    
\section{Acknowledgements}
  The authors would like to thank our reviewer whose comments improved the clarity of the paper.

\end{document}